\numberwithin{equation}{section}
\numberwithin{figure}{section}
\newcommand{\tabincell}[2]{\begin{tabular}{@{}#1@{}}#2\end{tabular}}
\theoremstyle{plain}
\newtheorem{lem}{Lemma}[section]
\newtheorem{thm}[lem]{Theorem}
\theoremstyle{definition}
\theoremstyle{remark}
\begin{document}
	
	\title{ \large\bf Conditional variational autoencoder with Gaussian process regression recognition for parametric models}
	
	\author{Xuehan Zhang\thanks{School of Mathematical Sciences,  Tongji University, Shanghai 200092, China. ({\tt  xhzhang@tongji.edu.cn}).}
		\and
		Lijian Jiang\thanks{School of Mathematical Sciences,  Tongji University, Shanghai 200092, China. ({\tt  ljjiang@tongji.edu.cn}).}
	}
\date{}
\maketitle
\begin{center}{\bf Abstract}
\end{center}\smallskip
In this article, we present a data-driven method for parametric models with noisy observation data. Gaussian process regression based  reduced order modeling (GPR-based ROM) can realize fast online predictions without using equations in the offline stage. However, GPR-based ROM does not perform well for complex systems  since POD projection are naturally linear. Conditional variational autoencoder (CVAE) can address this issue via nonlinear neural networks but it has more model complexity, which poses challenges for training and tuning hyperparameters. To this end, we propose a framework of CVAE with Gaussian process regression recognition (CVAE-GPRR). The proposed method consists of a recognition model and a  likelihood model. In the recognition model, we first extract low-dimensional features from data by POD to filter the redundant information with high frequency. And then a non-parametric model GPR is used to learn the map from parameters to POD latent variables, which can also alleviate the impact of noise.  CVAE-GPRR can achieve the similar accuracy  to CVAE but with fewer parameters. In the likelihood model, neural networks are used to reconstruct data. Besides the samples of POD latent variables and input parameters, physical variables are also added as the  inputs to make predictions in the whole physical space. This can not be achieved by either  GPR-based ROM or  CVAE.  Moreover,  the numerical results show that CVAE-GPRR may alleviate the overfitting 
issue in CVAE.

\smallskip
{\bf keywords:} Parametric models, Conditional variational autoencoder, Proper orthogonal decomposition, Gaussian process regression.

\section{Introduction}
\label{sec:Introduction}
Many science and engineering problems can be described by parametric  models, where the  parameters may characterize  material property, initial and boundary conditions,  geometrical regions, etc. Mathematical models typically describe the physics of real-world phenomena. The increasing complexity of mathematical models poses a challenge for numerical simulation. Traditional numerical  methods, such as finite element methods, finite difference methods, require the exact form of the differential equations and possibly  lead to an intractable discrete system. However, quick response of parametric  models for different parameter values is critical  for real-time applications (e.g., Bayesian inversion  or stochastic  control \cite{intro:BIP, intro:L. Ma}) and many-query  contexts (e.g., design or optimization \cite{intro:CVAE-drug design}).

To address the above issues, intrusive reduced order modeling (ROM) \cite{intro:intrusive ROM} has been widely explored   for decades, and  provides  a trade-off between modeling accuracy and computation efficiency  by carefully constructing a reduced order model of original full order model. As a popular  intrusive ROM, Garlerkin projection is implemented  on a low-dimensional  space, which is  spanned by a set of orthogonal vectors constructed by reduced basis methods (e.g., greedy algorithm \cite{intro:basis greedy algorithm}, proper orthogonal decomposition (POD)  \cite{intro:POD-Garlekin1, intro:POD-Garlekin2}). Furthermore, techniques such as  empirical interpolation method (EIM) \cite{intro:EIM} and  discrete empirical interpolation method (DEIM) \cite{intro:DEIM}  are used to obtain  affine decompositions for nonlinear problems and achieve  fast online computation for nonlinear  models with parameter dependence. The intrusive approaches depend on the underlying physics embedded in mathematical models. This limits the applications  because the mathematical models are often unavailable due to a lack of physical process  and  expert knowledge.  The intrusive ROM may  not well explain all information of  measurement  data.

 As the rapid development of science experiments  in recent years, paired (parameter-state pair) measurement data become much easier to acquire. Data-driven methods have risen in recent years to  discover latent laws of real-world problems from data. In contrast to intrusive ROM, data-driven methods can estimate the state of models  without differential  equations \cite{intro:M. L}. POD is one of the  data-driven methods  and has  been widely used to identify coherent structures in fluid \cite{intro:POD-fluid}. By singular value decomposition, POD can approximate the data information  by a linear combination of several dominant eigenvectors of covariance matrix of the data. To be applied to parametric  models, some data-driven ROMs (or non-intrusive ROMs)  achieve efficient online computation by combining regression methods (e.g., neural network \cite{intro:NN-based RB, intro:NN-based RB2, intro:NN-based RB3}, Gaussian process regression \cite{intro:GPR-based RB, intro:GPR-based RB2} and radial basis function regression \cite{intro:RBF-based RB}) with POD to learn a map from parameters to POD projection coefficients. The performance of these approaches highly depends on the number of basis functions (or vectors). However, when the data is noisy, it is difficult to determine a suitable  number of basis functions. In addition, POD basis is discontinuous with respect to the physical variables. For the unobserved physical region (i.e., no data information in the region), the ROMs need an  interpolation for prediction.

 As an unsupervised deep learning method, autoencoder (AE) \cite{intro:AE-original paper}  provides a more general framework for data-driven model reduction.  AE consists of two models: encoder and decoder.  The encoder (recognition model) extracts principle low-dimensional features  in the latent space from data and  alleviates the impact of noise and redundant information. And then the decoder (i.e., generative/likelihood model) reconstructs original data from the outputs of encoder. However, autoencoder is a deterministic method and prone to overfitting due to small datasets and a large number of  training parameters \cite{intro:overfit-AE}, and it can not well qualify the uncertainty of reconstruction data. Variational autoencoder (VAE) \cite{intro:VAE} is a probabilistic extension of autoencoder based on variational inference, which treats latent variables as random vectors and thus can model the distribution of data. By using an amortization network to approximate the posterior of the latent variables in a parametric manner, VAE can optimize recognition model and likelihood model at the same time. However, more parameters means more computational costs. In addition, VAE is a black-box model with neural networks that are usually not interpretable.

In this paper, we consider the case that only noisy observation data are available for unknown parametric  models. We want to develop   a data-driven method to model the parametric  models and make predictions for the unobserved parameter space and physical region. Compared to VAE, CVAE \cite{intro:CVAE} can generate samples from data space conditioned on some additional  information. Therefore, CVAE is more desirable for  the parametric models because we aim  at learning the distribution of states conditioned on the parameters. Combined  with data-driven ROM techniques, we want to use fewer parameters than standard CVAE. Instead of using neural networks to approximate the posterior of latent variables, we use the non-parametric regression method GPR to establish a mapping from the parameter space to the  latent space after extracting the low-dimensional features from data by POD. We show that  POD projection coefficients are mutually uncorrelated with respect to the parameters, and  learn the GPR models for each POD latent variable separately, which allows an efficient computation for high-dimensional latent spaces. To generate samples of the states in the whole physical space, we add physical variables into the inputs of the likelihood model. Since  the proposed method share the similarity to  CVAE, we call the method as conditional variational autoencoder with Gaussian process regression recognition and refer it to CVAE-GPRR in the rest of paper for simplicity.

This paper is organized as follows. In Section \ref{sec:Parametric models}, we give a brief introduction to the parametric models from the point view of deep latent variable models and   CVAE. Section \ref{sec:Proposed method} focuses on   CVAE-GPRR. In  this section, we also extensively compare the proposed  method with CVAE. In Section \ref{sec:Numerical results and discussions}, a few of numerical results are presented to illustrate the performance of the proposed method and compare it with CVAE and other data-driven ROMs. Finally, some conclusions are given.

\section{Parametric  models and CVAE}
\label{sec:Parametric models}
 In this section, we describe the general framework of parametric models and briefly present the standard conditional variational autoencoder.
\subsection{Deep latent-variable model for parametric models}
\label{ssec:Deep latent-variable model for parametric models}
Let us consider a parameter space $\Xi\subset \mathbb{R}^d$ where $d\geq 1$ with event field $\mathcal{F}_{\Omega}$ and probabilistic measure $p_{\boldsymbol{\xi}}$, and denote by $\boldsymbol{\xi}:=[\xi_1,...,\xi_d]^T$ a parameter vector encoding physical and/or geometrical properties of the problem. Furthermore, we introduce the physical variable by $\boldsymbol{x}:=[x_1,...,x_m]^T$, which is a m-dimensional vector taking values on a bounded physical region $\Omega\subset \mathbb{R}^m$. Here, the physical variable can represent space variable and/or time variable according to different application scenarios. We consider the following problems
\begin{equation}
	\mathscr{L}_{\boldsymbol{\gamma}}u(\boldsymbol{x},\boldsymbol{\xi})=0,
	\label{sec2_eq:parametric systems}
\end{equation}
whose response $u(\boldsymbol{x},\boldsymbol{\xi})$ depends on one or more parameters $\boldsymbol{\xi}$ and other invariant factors that influence the system are collected in the set $\boldsymbol{\gamma}$ such as fixed boundary condition and forcing. Such problems can be summarized as above parametric models, examples include but are not limited to parametric partial differential equations and parametric ordinary differential equations.

We assume that the response of a model $u(\boldsymbol{x},\boldsymbol{\xi})$ can be partly observed by numerical simulations or measurements. Our goal is to design a data-driven method to learn the data distribution conditioned on parameters and physical variables $p^{*}(U|\boldsymbol{s},\boldsymbol{\xi})$ without knowing mathematical equations. In this paper, the observation data for the $i-$th sample $\boldsymbol{\xi}^{(i)}$ of parameter vector is a matrix $\mathcal{D}^{(i)}\in \mathbb{R}^{M\times(m+d+1)}$ with the following structure
\begin{equation}
	\mathcal{D}^{(i)}=\begin{bmatrix}
		x^{(1)}_1&\cdots&x^{(1)}_m&\xi^{(i)}_1&\cdots&\xi^{(i)}_d&u(\boldsymbol{x}^{(1)},\boldsymbol{\xi}^{(i)})\\
		x^{(2)}_1&\cdots&x^{(2)}_m&\xi^{(i)}_1&\cdots&\xi^{(i)}_d&u(\boldsymbol{x}^{(2)},\boldsymbol{\xi}^{(i)})\\
		\vdots&\ddots&\vdots&\vdots&\ddots&\vdots&\vdots\\
		x^{(M)}_1&\cdots&x^{(M)}_m&\xi^{(i)}_1&\cdots&\xi^{(i)}_d&u(\boldsymbol{x}^{(M)},\boldsymbol{\xi}^{(i)})\\
	\end{bmatrix}, i=1,...,D.
\end{equation}
Then the total data matrix $\boldsymbol{\mathcal{D}}$ can be obtained by concatenating above matrices by row, i.e., $\boldsymbol{\mathcal{D}}=[\mathcal{D}^{(1)};...;\mathcal{D}^{(D)}]$. In the context of some discrete model reduction methods, the observation data can also be organized as a snapshot matrix $\boldsymbol{\mathcal{S}}:=[\boldsymbol{U}^{(1)};...;\boldsymbol{U}^{(D)}]$ where $\boldsymbol{U}^{(i)}\in \mathbb{R}^{1\times M}$ is an observation of $u(\boldsymbol{x}_{1:M},\boldsymbol{\xi}^{(i)})^T$ and $M$ different discrete points in physical region $\{\boldsymbol{x}^{(i)}\}_{i=1}^M$ is denoted by $\boldsymbol{x}_{1:M}$.

In the context of regression tasks, the goal is to find a prediction function $\Psi_{\theta}:\Omega\times\Xi\rightarrow
\mathbb{R}$ such that the corresponding model evidence $p_{\theta}(U|\boldsymbol{x},\boldsymbol{\xi})$ can accurately approximate the true model evidence $p^{*}(U|\boldsymbol{x},\boldsymbol{\xi})$ in Kullback–Leibler(KL) divergence
\begin{equation}
	\theta^{*}=\mathop{\arg\min}\limits D_{KL}\big(p^{*}(U|\boldsymbol{s},\boldsymbol{\xi})\Vert p_{\theta}(U|\boldsymbol{s},\boldsymbol{\xi})\big),
	\label{sec2_eq:OP_regression1}
\end{equation}
where KL divergence is a common tool to measure the  distance between two probability densities and is defined as $D_{KL}(p(x)||q(x))=\int p(x) \log\big({p(x)}/{q(x)}\big) dx$.  If we choose the likelihood model as independent Gaussian distributions (or we use squared-error loss), namely
\begin{equation}
	U=\Psi_{\theta}(\boldsymbol{x},\boldsymbol{\xi})+\epsilon, \ \epsilon\mathop{\sim}\limits^{i.i.d}\mathcal{N}(0,\sigma_{obs}^2).
	\label{sec2_eq:LE_regression}
\end{equation}
The following theorem can specifies the exact form of the optimal prediction function $\Psi^{*}_{\theta}$.
\begin{thm}{\rm(\cite{sec2:regression problems}, Chapter 2)}
	If the likelihood model of data is chosen as model (\ref{sec2_eq:LE_regression}), the optimal prediction function $\Psi^*$ equals to the conditional expectation of $U$ conditioned on $\boldsymbol{x}$ and $\boldsymbol{\xi}$
	\begin{equation}
		\Psi_{\theta^{*}}(\boldsymbol{x},\boldsymbol{\xi})=\mathbb{E}_{p^{*}(U|\boldsymbol{x},\boldsymbol{\xi})}(U|\boldsymbol{x},\boldsymbol{\xi}).
		\nonumber
	\end{equation}
\end{thm}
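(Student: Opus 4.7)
The plan is to reduce the KL-divergence minimization in (\ref{sec2_eq:OP_regression1}) to a pointwise squared-error minimization and then invoke the elementary fact that the conditional expectation is the $L^2$-optimal predictor. First I would clarify the setup by observing that $p_\theta(U\mid\boldsymbol{x},\boldsymbol{\xi})$ is, under the likelihood (\ref{sec2_eq:LE_regression}), the Gaussian density
\begin{equation*}
p_\theta(U\mid\boldsymbol{x},\boldsymbol{\xi})=\frac{1}{\sqrt{2\pi\sigma_{obs}^2}}\exp\!\left(-\frac{(U-\Psi_\theta(\boldsymbol{x},\boldsymbol{\xi}))^2}{2\sigma_{obs}^2}\right),
\end{equation*}
so that $-\log p_\theta(U\mid\boldsymbol{x},\boldsymbol{\xi})$ depends on $\theta$ only through the quadratic term $(U-\Psi_\theta(\boldsymbol{x},\boldsymbol{\xi}))^2/(2\sigma_{obs}^2)$.

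Next I would expand the KL divergence (understood, as is implicit in the statement, in expectation over the joint law of $(\boldsymbol{x},\boldsymbol{\xi})$) as
\begin{equation*}
D_{KL}\bigl(p^{*}\,\|\,p_\theta\bigr)=-H(p^{*})-\mathbb{E}_{p^{*}(U,\boldsymbol{x},\boldsymbol{\xi})}\bigl[\log p_\theta(U\mid\boldsymbol{x},\boldsymbol{\xi})\bigr].
\end{equation*}
The entropy term $H(p^{*})$ is independent of $\theta$, so after substituting the Gaussian form and dropping constants, the optimization (\ref{sec2_eq:OP_regression1}) becomes equivalent to
\begin{equation*}
\theta^{*}=\arg\min_{\theta}\ \mathbb{E}_{p^{*}(U,\boldsymbol{x},\boldsymbol{\xi})}\bigl[(U-\Psi_\theta(\boldsymbol{x},\boldsymbol{\xi}))^2\bigr].
\end{equation*}
Conditioning on $(\boldsymbol{x},\boldsymbol{\xi})$ and applying the tower property, this integral decomposes into a pointwise problem: for each fixed pair $(\boldsymbol{x},\boldsymbol{\xi})$, minimize $\mathbb{E}_{p^{*}(U\mid\boldsymbol{x},\boldsymbol{\xi})}[(U-c)^2]$ over the scalar $c=\Psi_\theta(\boldsymbol{x},\boldsymbol{\xi})$.

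Finally I would invoke the elementary identity
\begin{equation*}
\mathbb{E}[(U-c)^2\mid\boldsymbol{x},\boldsymbol{\xi}]=\operatorname{Var}(U\mid\boldsymbol{x},\boldsymbol{\xi})+\bigl(c-\mathbb{E}[U\mid\boldsymbol{x},\boldsymbol{\xi}]\bigr)^2,
\end{equation*}
from which it is immediate that the minimizer is $c^{*}=\mathbb{E}_{p^{*}(U\mid\boldsymbol{x},\boldsymbol{\xi})}[U\mid\boldsymbol{x},\boldsymbol{\xi}]$, giving the claimed $\Psi_{\theta^{*}}$. There is essentially no technical obstacle here; the only mildly delicate point is justifying the exchange of the outer $(\boldsymbol{x},\boldsymbol{\xi})$-expectation with the pointwise minimization, which is legitimate because the hypothesis class $\{\Psi_\theta\}$ is assumed rich enough (as is standard in this context) to realize the pointwise optimum as a function of $(\boldsymbol{x},\boldsymbol{\xi})$. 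Since this is a classical result, I would simply cite it as done in the statement rather than reproduce the derivation in detail.
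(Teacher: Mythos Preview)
Your derivation is correct and is the standard argument for this classical fact. Note that the paper does not actually prove this theorem; it is stated with a citation to a textbook (Kroese et al., Chapter~2) and no proof is given in the body of the article, so there is nothing to compare against beyond observing that your outline matches the usual textbook treatment.
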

However, the true model evidence $p^{*}(U|\boldsymbol{x},\boldsymbol{\xi})$ is often too complex to be approximated accurately by Gaussian distribution. Some existing works introduce latent variables $\boldsymbol{Z}$ to obtain a better model, namely
\begin{equation}
	\begin{aligned}
		p_{\theta}(U|\boldsymbol{x},\boldsymbol{\xi})&=\int p_{\theta}(U|\boldsymbol{z},\boldsymbol{x},\boldsymbol{\xi})p_{\theta}(\boldsymbol{Z}|\boldsymbol{x},\boldsymbol{\xi})\ d\boldsymbol{z}\\
		&=\mathbb{E}_{p_{\theta}(\boldsymbol{Z}|\boldsymbol{x},\boldsymbol{\xi})}\bigg(p_{\theta}(U|\boldsymbol{z},\boldsymbol{x},\boldsymbol{\xi})\bigg).
	\end{aligned}
	\label{sec2_eq:LVM}
\end{equation}
It is obvious that, even when each factor $\big($prior $p_{\theta}(\boldsymbol{Z}|\boldsymbol{x},\boldsymbol{\xi})$ and conditional distribution $p_{\theta}(U|\boldsymbol{z},\boldsymbol{x},\boldsymbol{\xi})\big)$ is relatively simple such as Gaussian distribution, the approximation model evidence $p_{\theta}(U|\boldsymbol{x},\boldsymbol{\xi})$ can be very complex.

Above all, the optimization problem (\ref{sec2_eq:OP_regression1}) can be rewritten as
\begin{equation}
	\begin{aligned}	
		\theta^{*}&=\mathop{\arg\max}\limits \mathbb{E}_{p^{*}(U|\boldsymbol{x},\boldsymbol{\xi})}\bigg(\log\big(\mathbb{E}_{p_{\theta}(\boldsymbol{Z}|\boldsymbol{x},\boldsymbol{\xi})}p_{\theta}(U|\boldsymbol{z},\boldsymbol{x},\boldsymbol{\xi})\big)\bigg).\\
	\end{aligned}
	\label{sec2_eq:OP_LVM}
\end{equation}
Furthermore, the probability model (\ref{sec2_eq:LVM}) is well-known deep latent-variable model (DLVM) \cite{sec2:reparametrization} if neural networks are used to learn the parameters in likelihood model $p_{\theta}(U|\boldsymbol{z},\boldsymbol{x},\boldsymbol{\xi})$ where $U=\Psi_{\theta}(\boldsymbol{Z},\boldsymbol{x},\boldsymbol{\xi})+\epsilon$. In the next section, we will introduce a class of DLVM — Conditional variational autoencoder (CVAE), which is the foundation of our work.

\subsection{Conditional variational autoencoders}
\label{ssec:Conditional variational autoencoders}
In previous section, we introduced deep latent-variable models. It is usually intractable to solve the maximum likelihood problem defined in (\ref{sec2_eq:OP_LVM}) since the model evidence $p_{\theta}(U|\boldsymbol{x},\boldsymbol{\xi})$ does not have an analytic solution or efficient estimator. The framework of (conditional) variational autoencoder provides a computationally efficient way to address this problem with approximate inference techniques for approximating the posterior $p_{\theta}(\boldsymbol{Z}|U,\boldsymbol{x},\boldsymbol{\xi})$ in DLVM. According to Bayes's rule, we have
\begin{equation} p_{\theta}(\boldsymbol{Z}|U,\boldsymbol{x},\boldsymbol{\xi})=\frac{p_{\theta}(U,\boldsymbol{Z}|\boldsymbol{x},\boldsymbol{\xi})}{p_{\theta}(U|\boldsymbol{x},\boldsymbol{\xi})}.
	\nonumber
\end{equation}
A tractable posterior thus leads to a tractable model evidence since complete likelihood model $p_{\theta}(U,\boldsymbol{Z}|\boldsymbol{x},\boldsymbol{\xi})$ is tractable to compute. In this section, we consider the case that the data are observed at $M$ discrete points in $\Omega$, and then the posterior does not depend on the physical variables $\boldsymbol{x}$, i.e., $p_{\theta}(\boldsymbol{Z}|\boldsymbol{U},\boldsymbol{\xi})$.

A family of posterior $q_{\phi}(\boldsymbol{Z}|\boldsymbol{U},\boldsymbol{\xi})$ is introduced to approximate the posterior of latent variables in KL divergence as follows
\begin{equation}
	\phi^{*}=\mathop{\arg\min}\limits D_{KL}\big(q_{\phi}(\boldsymbol{Z}|\boldsymbol{U},\boldsymbol{\xi})\Vert p_{\theta}(\boldsymbol{Z}|\boldsymbol{U},\boldsymbol{\xi})\big).
	\label{sec2:KL_CVAE}
\end{equation}
It is impossible to solve above optimization problem directly because the true posterior is unknown. To get rid of this term, we perform some algebraic manipulations and arrive at
\begin{equation}
	\begin{aligned}
		D_{KL}\big(q_{\phi}(\boldsymbol{Z}|\boldsymbol{U},\boldsymbol{\xi})\Vert p_{\theta}(\boldsymbol{Z}|\boldsymbol{U},\boldsymbol{\xi})\big)&=\int q_{\phi}(\boldsymbol{Z}|\boldsymbol{U},\boldsymbol{\xi})\log\frac{q_{\phi}(\boldsymbol{Z}|\boldsymbol{U},\boldsymbol{\xi})}{p_{\theta}(\boldsymbol{Z}|\boldsymbol{U},\boldsymbol{\xi})}d\boldsymbol{z}\\
		&=-\int q_{\phi}(\boldsymbol{Z}|\boldsymbol{U},\boldsymbol{\xi})\log\frac{p_{\theta}(\boldsymbol{U},\boldsymbol{Z}|,\boldsymbol{\xi})}{q_{\phi}(\boldsymbol{Z}|\boldsymbol{U},\boldsymbol{\xi})p_{\theta}(\boldsymbol{U}|,\boldsymbol{\xi})}d\boldsymbol{z}\\
		&=-\mathbb{E}_{q_{\phi}}\bigg(\log\frac{p_{\theta}(\boldsymbol{U},\boldsymbol{Z}|\boldsymbol{\xi})}{q_{\phi}(\boldsymbol{Z}|\boldsymbol{U},\boldsymbol{\xi})}\bigg)+\log  p_{\theta}(\boldsymbol{U}|\boldsymbol{\xi}).\\
	\end{aligned}
	\label{sec2:deduction_ELBO}
\end{equation}
Since the model evidence $p_{\theta}(\boldsymbol{U}|\boldsymbol{\xi})$ is constant with respect to parameters $\phi$, minimizing the KL divergence in (\ref{sec2:KL_CVAE}) is equivalent to maximizing the first term of the right-hand side of equation (\ref{sec2:deduction_ELBO}). Furthermore, reorganizing the last equation in (\ref{sec2:deduction_ELBO}), we achieve
\begin{equation}
	\log  p_{\theta}(\boldsymbol{U}|\boldsymbol{\xi})=\mathbb{E}_{q_{\phi}}\bigg(\log\frac{p_{\theta}(\boldsymbol{U},\boldsymbol{Z}|\boldsymbol{\xi})}{q_{\phi}(\boldsymbol{Z}|\boldsymbol{U},\boldsymbol{\xi})}\bigg)+D_{KL}\big(q_{\phi}(\boldsymbol{Z}|\boldsymbol{U},\boldsymbol{\xi})\Vert p_{\theta}(\boldsymbol{Z}|\boldsymbol{U},\boldsymbol{\xi})\big).
	\label{sec2:deduction_ELBO2}
\end{equation}
By the fact that $D_{KL}\big(q_{\phi}(\boldsymbol{Z}|\boldsymbol{U},\boldsymbol{\xi})\Vert p_{\theta}(\boldsymbol{Z}|\boldsymbol{U},\boldsymbol{\xi})\big)\geq 0$, the expectation in equation (\ref{sec2:deduction_ELBO2}) is a lower bound on $\log p_{\theta}(\boldsymbol{U}|\boldsymbol{\xi})$. For this reason, it is called the evidence lower bound (ELBO). We can rearrange the ELBO into the more interpretable form
\begin{equation} ELBO(q)=\mathbb{E}_{q_{\phi}}\bigg(p_{\theta}(\boldsymbol{U}|\boldsymbol{z},\boldsymbol{\xi})\bigg)-D_{KL}\big(q_{\phi}(\boldsymbol{Z}|\boldsymbol{U},\boldsymbol{\xi})\Vert p_{\theta}(\boldsymbol{Z}|\boldsymbol{\xi})\big).
	\label{sec2:CVAE_ELBO}
\end{equation}
The first term is the expected likelihood under the posterior of latent variables, which can let the likelihood model better explain the data. The second term acts as a regularizer to push the posterior of latent variables toward the prior. It can be observed that there are two distributions, the  posterior of latent variables $q_{\phi}(\boldsymbol{Z}|\boldsymbol{U},\boldsymbol{\xi})$ and the likelihood $p_{\theta}(\boldsymbol{U}|\boldsymbol{z},\boldsymbol{\xi})$, to be approximated in equation (\ref{sec2:CVAE_ELBO}). Neural networks are used to learn the parameters in distributions as follows \cite{sec2:CVAE}:
\begin{equation}
	\phi=NN_{recog}(\boldsymbol{U},\boldsymbol{\xi};\Phi),
	\label{sec2:CVAE_recognition}
\end{equation}
the local parameters $\phi$ in the approximated posterior $q_{\phi}(\boldsymbol{Z}|\boldsymbol{U},\boldsymbol{\xi})$ varies with data. With above amortized inference techniques, recognition model (\ref{sec2:CVAE_recognition}) can share the global parameters $\Phi$ across data points.
\begin{equation}
	\boldsymbol{z}^{(i)}=g_{\Phi}(\boldsymbol{U},\epsilon_0^{(i)}),\ \epsilon_0^{(i)}\sim p(\epsilon_0)\  \text{(base distribution)}.
	\label{sec2:CVAE_reparametrization}
\end{equation}
And then we can draw samples of latent variables $Z$ from posterior by reparameterization trick (\ref{sec2:CVAE_reparametrization}), which enables us to use gradient descent to optimize the model.
Next, we pass the sample $\boldsymbol{z}^{(i)}$ through the likelihood model (\ref{sec2:CVAE_likelihood}) to obtain the parameters of distribution $p_{\theta}(\boldsymbol{U}|\boldsymbol{z},\boldsymbol{\xi})$. 
\begin{equation}
	\theta=NN_{le}(\boldsymbol{z}^{(i)},\boldsymbol{\xi};\Theta).
	\label{sec2:CVAE_likelihood}
\end{equation}
\section{Conditional variational autoencoder with Gaussian process regression recognition (CVAE-GPRR)}
\label{sec:Proposed method}
Similar to CVAE, the proposed method also consists of two separate models: recognition model and likelihood model. Inspired by data-driven ROM techniques, we obtain a more interpretable recognition model with fewer parameters than the standard CVAE. We first use POD to filter redundant information that is high frequency from data to obtain low-dimensional observations of latent variables. And then non-parametric regression method GPR is used to learn the mapping from parameter space to latent space, which also helps to denoise. For likelihood model, neural networks are used to reconstruct data from latent space since POD basis may not be the optimal choice when data is noisy. By adding physical variables as parts of inputs, our framework can generate samples in the unobserved physical region. In addition, we will extensively compare CVAE-GPRR with CVAE in the last subsection.

\subsection{GPR recognition model}
\label{ssec:GPR recognition model}
POD in CVAE-GPRR is used to find the underlying low-dimensional structure from data and the projection coefficients are chosen as latent variables to make the model evidence more complex. And then GPR is used to learn the mapping from parameter space to latent space.

\subsubsection{The proper orthogonal decomposition and POD latent variables}
\label{ssec:The proper orthogonal decomposition and POD-latent variables}
Let $\mathcal{H}(\Omega)$ be a separable Hilbert space with inner product $\left<\cdot,\cdot\right>_{\mathcal{H}(\Omega)}$ and induced norm $\Vert\cdot\Vert_{\mathcal{H}(\Omega)}$ , which takes values in physical region $\Omega$. For any subspace $\mathcal{V}\subseteq\mathcal{H}(\Omega)$, define an orthogonal projection operator $\Pi_{\mathcal{V}}:\mathcal{H}(\Omega)\rightarrow \mathcal{V}$, and then the projection error can be written as
\begin{equation}
	\mathcal{R}(\mathcal{V}):=\mathbb{E}_{p_{(\boldsymbol{\xi})}}\Vert u(\boldsymbol{x},\boldsymbol{\xi})-\Pi_{\mathcal{V}}u(\boldsymbol{x},\boldsymbol{\xi})\Vert_{\mathcal{H}(\Omega)}^2.
	\nonumber
\end{equation}\par
The main idea of POD is to find a finite subspace $\mathcal{V}_k:=span\{v_1,v_2,...,v_k\}$ of $\mathcal{H}(\Omega)$ to minimize the projection error $\mathcal{R}(\mathcal{V}_{k})$. In terms of the definition of $\mathcal{V}_k$, the orthogonal projection operator can be written as
\begin{equation}
	\Pi_{\mathcal{V}_k}u(\boldsymbol{x},\boldsymbol{\xi})=\sum_{i=1}^k\left<u(\boldsymbol{x},\boldsymbol{\xi}),v_i\right>_{\mathcal{H}(\Omega)}v_i.
	\nonumber
\end{equation} \par
Given a snapshot matrix of data $\boldsymbol{\mathcal{S}} \in \mathbb{R}^{D\times M}$ that is defined in Subsection \ref{ssec:Deep latent-variable model for parametric models}, we can define empirical projection error by Monte Carlo and obtain the discrete form of orthogonal projection operator, namely
\begin{equation}
	\mathcal{R}_{D,M}(V_{k,M}):=\frac{1}{D}\sum_{i=1}^{D}\Vert \boldsymbol{U}^{(i)}-\Pi_{V_{k,M}}\boldsymbol{U}^{(i)}\Vert_2^2,
	\nonumber
\end{equation}
where the norm $\Vert\cdot\Vert_2$ is the vector 2-norm and discrete projection operator is defined as $\Pi_{V_{k,M}}\boldsymbol{U}^{(i)}:=V_{k,M}V_{k,M}^T[\boldsymbol{U}^{(i)}]^T, \ V_{k,M}=[\boldsymbol{v}_1,...,\boldsymbol{v}_k]\in R^{M\times k}$. Furthermore, we define the empirical covariance matrix
\begin{equation}
	\hat{C}_{D,u}=\frac{1}{D}\sum_{i=1}^{D}(\boldsymbol{U}^{(i)}-\overline{\boldsymbol{\mathcal{D}}})^T(\boldsymbol{U}^{(i)}-\overline{\boldsymbol{\mathcal{D}}}),\ \overline{\boldsymbol{\mathcal{D}}}=\frac{1}{D}\sum_{i=1}^{D}\boldsymbol{U}^{(i)}.
	\label{sec3:empirical covariance matrix}
\end{equation}
It is well known that $\frac{D}{D-1}\hat{C}_{D,u}$ is an unbiased estimation of covariance matrix
\begin{equation} C_{u}=\mathbb{E}_{p_{(\boldsymbol{\xi})}}\bigg(\big(u(\boldsymbol{x}_{1:M},\boldsymbol{\xi})-\mathbb{E}_{p_{(\boldsymbol{\xi})}}u(\boldsymbol{x}_{1:M},\boldsymbol{\xi})\big)^T\big(u(\boldsymbol{x}_{1:M},\boldsymbol{\xi})-\mathbb{E}_{p_{(\boldsymbol{\xi})}}u(\boldsymbol{x}_{1:M},\boldsymbol{\xi})\big)\bigg).
	\nonumber
\end{equation}
Suppose $\hat{C}_{D,u}=V\Lambda^2V^T$ is an eigenvalue decomposition of $\hat{C}_{D,u}$. Then the columns of POD basis matrix $V_{k,M}$, which solves optimization problem $\mathop{\min}\limits_{V_{k,M}}\mathcal{R}_{D,M}(V_{k,M})$, are eigenvectors corresponding to the $k$ largest diagonal elements in $\Lambda^2:=diag(\lambda_1,...,\lambda_M),\ \lambda_1\ge...\ge\lambda_M\ge0$. Furthermore,
\begin{equation}
	\mathcal{R}_{D,M}(V_{k,M})=\sum_{i=k+1}^{M}\lambda_i,
	\nonumber
\end{equation}
from which we can infer that the square of relative empirical projection error is
\begin{equation}
 E_k:=\frac{\sum_{i=k+1}^{M}\lambda_i}{\sum_{i=1}^{M}\lambda_i}.
 \nonumber
\end{equation}
By setting a tolerance $\epsilon_{POD}$, we can quickly select a suitable hyperparameter $k$, that is
\begin{equation}
	k:=\mathop{\arg\min}\{0\le k\le M |E_k \le \epsilon_{POD}^2\}.
	\label{sec3:pod_k}
\end{equation}
The procedure of POD is summarized in the following algorithm.
\begin{algorithm}[H]
	\caption{Proper Orthogonal Decomposition (POD)} 
	\hspace*{0.02in} \textbf{Input:} Snapshot matrix $\boldsymbol{\mathcal{D}}$, relative empirical projection error tolerance $\epsilon_{POD}$\\
	\hspace*{0.02in} {\bf Output:} 
	POD basis matrix $V_{k,M}$
	\begin{algorithmic}[1]
		\State Centering the data by subtracting the estimated mean with respect to $\boldsymbol{\xi}$: $\hat{\boldsymbol{\mathcal{D}}}=\boldsymbol{\mathcal{D}}-\overline{\boldsymbol{\mathcal{D}}}$, where the computation of $\overline{\boldsymbol{\mathcal{D}}}$ refers to equation (\ref{sec3:empirical covariance matrix});
		\State Compute the covariance matrix $\hat{C}_{D,u}$ estimated from data  according to equation (\ref{sec3:empirical covariance matrix});
		\State Solve the eigenvalue problem for $\hat{C}_{D,u}$, i.e., $\hat{C}_{D,u} \boldsymbol{v}_i=\lambda_i \boldsymbol{v}_i, \ i=1,...,M$;
		\State Select hyperparameter $k$ according to equation (\ref{sec3:pod_k});
		\State \Return POD basis matrix $V_{k,M}:=[\boldsymbol{v}_1,...,\boldsymbol{v}_k]$.
	\end{algorithmic}
	\label{sec3:algorithm_POD}
\end{algorithm}
The discrete orthogonal projection operator $\Pi_{V_{k,M}}$ can be seen as a linear combination of POD basis $\boldsymbol{v}_i$. Therefore, we rewritten it as follows
\begin{equation}
	\Pi_{V_{k,M}}(\boldsymbol{U}^{(i)})=\sum_{j=1}^k\left<[\boldsymbol{U}^{(i)}]^T,\boldsymbol{v}_j\right>_2\boldsymbol{v}_j=V_{k,M}\boldsymbol{z}_i,
	\label{sec3:pod-latent variables}
\end{equation}
where $\boldsymbol{z}_i=V_{k,M}^T[\boldsymbol{U}^{(i)}]^T, \boldsymbol{z}_i=[z_{i,1},...,z_{i,k}]^T, z_{i,j}=\left<[\boldsymbol{U}^{(i)}]^T,\boldsymbol{v}_j\right>_2$ and $\left<\cdot,\cdot\right>_2$ is the Euclidean inner product. We refer $\boldsymbol{Z}=[\boldsymbol{Z}_1,...,\boldsymbol{Z}_k]$ as POD projection coefficients and choose it as latent variables for model (\ref{sec2_eq:LVM}). In the context of DLVM, we use the term POD latent variables to denote the random variables $\boldsymbol{Z}$. For each data snapshot $\boldsymbol{U}^{(i)}$, $z_{i,j}$ can be seen as a realization for the j-th entry of $\boldsymbol{Z}$. It is obvious that $\boldsymbol{Z}$ varies according to $\boldsymbol{\xi}$, that is, there is a multioutput mapping $\boldsymbol{\pi}$ such that $\boldsymbol{Z}=\boldsymbol{\pi}_{\boldsymbol{Z}}(\boldsymbol{\xi})$. In the following subsections, Gaussian process regression is utilized to learn the mapping $\pi$ and thus obtain the posterior for $\boldsymbol{Z}$ conditioned on $\boldsymbol{\xi}$ and data $\boldsymbol{\mathcal{D}}$.

\subsubsection{Gaussian process regression}
Gaussian process regression is one of the most popular Bayesian non-parametric regression techniques based on kernel machinery. The goal of GPR is to learn a regression function $\pi_{GPR}$ that predicts the output $y=\pi_{GPR}(\boldsymbol{\xi})\in \mathbb{R}$ given an input $\boldsymbol{\xi}$.

First, the prior distribution of $\pi_{GPR}$ is given by Gaussian processes with mean function $\mu(\boldsymbol{\xi})$ and covariance function $\kappa(\boldsymbol{\xi},\boldsymbol{\xi}^{'})$. For simplicity, we take $\mu$ to be zero. And $\kappa$ is a kernel function that can describe the correlation between inputs. For instance, radial basis function kernels (including squared exponential(SE) kernel) suppose that points that are closer will be more highly correlated. The kernel function, that we will use in this paper, is the automatic relevance determination squared exponential kernel (ARD SE kernel) \cite{sec3:ARD SE kernel}:
\begin{equation}
	\kappa(\boldsymbol{\xi},\boldsymbol{\xi}^{'})=\sigma_{\pi}^2\exp\bigg(-\frac{1}{2}\sum_{i=1}^d\frac{(\xi_i-\xi_i^{'})^2}{l_i^2}\bigg),
	\label{sec3:kernel function}
\end{equation}
which weights each input dimension by individual lengthscale $l_i$ compared to the squared exponential kernel.

Second, we regard the observed input $\{\boldsymbol{\xi}^{(i)}\}_{i=1}^{D_{GPR}}$ as determinant and the output \\$\{y_i\}_{i=1}^{D_{GPR}}$ as outcomes of random variables $\{Y_i\}_{i=1}^{D_{GPR}}$. Therefore, given prediction function $\pi_{GPR}$, we can model $\{Y_i\}_{i=1}^{D_{GPR}}$ with Gaussian noise as follows
\begin{equation}
	Y_i=\pi_{GPR}(\boldsymbol{\xi}^{(i)})+\epsilon_{GPR,i},\ i=1,...,D_{GPR},
	\nonumber
\end{equation}
where $\epsilon_{GPR,i}\mathop{\sim}\limits^{i.i.d}\mathcal{N}(0,\sigma_{GPR}^2)$. We denote the regression values by
\begin{equation}
 \boldsymbol{\pi}_{GPR}=[\pi_{GPR}(\boldsymbol{\xi}^{(1)}),...,\pi_{GPR}(\boldsymbol{\xi}^{(D_{GPR})})]^T,
 \nonumber
\end{equation}
and denote the hyperparameters by $\alpha:=\{\sigma_{\pi},l_1,...,l_d,\sigma_{GPR}\}$.
According to the definition of Gaussian process, we know that GP prior over function $\pi$ is equivalent to placing a Gaussian prior on the vector $\boldsymbol{\pi}_{GPR}$:
\begin{equation}
	\boldsymbol{\pi}_{GPR}\sim\mathcal{N}(0,\boldsymbol{K}),
	\label{sec3:GPR prior}
\end{equation}
where the covariance matrix $\boldsymbol{K}$ is a Gram matrix,namely
\begin{equation}
	\boldsymbol{K}=\begin{bmatrix}
		\kappa(\boldsymbol{\xi}^{(1)},\boldsymbol{\xi}^{(1)})& \kappa(\boldsymbol{\xi}^{(1)},\boldsymbol{\xi}^{(2)})&\cdots
		&\kappa(\boldsymbol{\xi}^{(1)},\boldsymbol{\xi}^{(D_{GPR})})\\
		\kappa(\boldsymbol{\xi}^{(2)},\boldsymbol{\xi}^{(1)})& \kappa(\boldsymbol{\xi}^{(2)},\boldsymbol{\xi}^{(2)})&\cdots
		&\kappa(\boldsymbol{\xi}^{(2)},\boldsymbol{\xi}^{(D_{GPR})})\\
		\vdots&\vdots&\ddots&\vdots\\
		\kappa(\boldsymbol{\xi}^{(D_{GPR})},\boldsymbol{\xi}^{(1)})& \kappa(\boldsymbol{\xi}^{(D_{GPR})},\boldsymbol{\xi}^{(2)})&\cdots
		&\kappa(\boldsymbol{\xi}^{(D_{GPR})},\boldsymbol{\xi}^{(D_{GPR})})
	\end{bmatrix}.
\nonumber
\end{equation}
Then the likelihood of observation is $\boldsymbol{Y}|\boldsymbol{\pi}_{GPR},\boldsymbol{\xi}_{1:D_{GPR}}\sim\mathcal{N}(\boldsymbol{\pi},\sigma_{GPR}^2\boldsymbol{I}_{D_{GPR}})$. Combined with the prior (\ref{sec3:GPR prior}), we can obtain the marginal distribution through computation:
\begin{equation}
	\boldsymbol{Y}|\boldsymbol{\xi}_{1:D_{GPR}}\sim\mathcal{N}(0,\boldsymbol{K}_{\boldsymbol{y}})
	\nonumber
\end{equation}
where $\boldsymbol{\xi}_{1:D_{GPR}}=\{\boldsymbol{\xi}^{(i)}\}_{i=1}^{D_{GPR}},\ \boldsymbol{Y}=[Y_1,Y_2,...,Y_{D_{GPR}}]$ and $\boldsymbol{K}_{\boldsymbol{y}}=\boldsymbol{K}+\sigma_{GPR}^2\boldsymbol{I}_{D_{GPR}}$. Given a new input $\boldsymbol{\xi}^{*}$, denote the corresponding output by $\pi^{*}$. We can estimate $\pi^{*}$ from the posterior $q_{\phi^{*}}(\pi^{*}|\boldsymbol{y},\boldsymbol{\xi}^{*};\alpha)$, which is also a Gaussian distribution with mean and variance given by
\begin{equation}
	\mu(\boldsymbol{\xi}^{*})=\boldsymbol{\kappa}(\boldsymbol{\xi}^{*})^T\boldsymbol{K}_{\boldsymbol{y}}^{-1}\boldsymbol{y}
	\label{sec3:GPR mean}
	\nonumber
\end{equation}
and
\begin{equation}
	\sigma(\boldsymbol{\xi}^{*})=\kappa(\boldsymbol{\xi}^{*},\boldsymbol{\xi}^{*})-\boldsymbol{\kappa}^T\boldsymbol{K}_{\boldsymbol{y}}^{-1}\boldsymbol{\kappa},
	\label{sec3:GPR var}
	\nonumber
\end{equation}
where we collect $\mu(\boldsymbol{\xi}^{*})$ and $\sigma({\boldsymbol{\xi}}^{*})$ into vector $\phi^{*}: =[\mu(\boldsymbol{\xi}^{*}),\sigma(\boldsymbol{\xi}^{*})]$ ,$\boldsymbol{y}=[y_1,...,y_{D_{GPR}}]^T$ and $\boldsymbol{\kappa}(\boldsymbol{\xi}^{*})=[\kappa(\boldsymbol{\xi}^{*},\boldsymbol{\xi}^{(1)}),...,\kappa(\boldsymbol{\xi}^{*},\boldsymbol{\xi}^{(D_{GPR})})]^T$. The posterior can be obtained via Bayes's rules $q_{\phi^{*}}(\pi^{*}|\boldsymbol{y},\boldsymbol{\xi}^{*};\alpha)={q(\boldsymbol{y},\pi|\boldsymbol{\xi}_{1:D_{GPR}},\boldsymbol{\xi}^{*};\alpha)}/{q(\boldsymbol{y}|\boldsymbol{\xi}_{1:D_{GPR}};\alpha)}$ where the joint distribution is also Gaussian with zero mean and covariance matrix $\boldsymbol{K}_{\boldsymbol{y},\pi^{*}}$
\begin{equation}
	\boldsymbol{K}_{\boldsymbol{y},\pi^{*}}=\begin{bmatrix}
		\boldsymbol{K}_{\boldsymbol{y}}&\boldsymbol{\kappa}(\boldsymbol{\xi}^{*})\\
		\boldsymbol{\kappa}(\boldsymbol{\xi}^{*})^T&\kappa(\boldsymbol{\xi}^{*},\boldsymbol{\xi}^{*})
	\end{bmatrix}.
\nonumber
\end{equation}\par
The prediction of GPR depends ,to some degree, on the hyperparameters $\alpha$. Here, we make a point estimate of $\alpha$ from data by maximizing the log likelihood function. Using a gradient-based optimizer, we can get the optimal $\alpha^{*}$ :
\begin{equation}
	\alpha^{*}=\mathop{\arg\max}\limits_{\alpha\ge 0}\ \log q(\boldsymbol{y}|\boldsymbol{\xi}_{1:D_{GPR}};\alpha)=\mathop{\arg\max}\limits_{\alpha\ge 0}\ -\frac{1}{2}\log\left|\det(\boldsymbol{K}_{\boldsymbol{y}})\right|-\frac{1}{2}\boldsymbol{y}^T\boldsymbol{K}_{\boldsymbol{y}}^{-1}\boldsymbol{y}.
	\label{sec3:GPR hyperparameters}
\end{equation}
The procedure of GPR is summarized in the following algorithm.
\begin{algorithm}[H]
	\caption{Gaussian process regression (GPR)} 
	\hspace*{0.02in} \textbf{Input:} A dataset $\{(\boldsymbol{\xi}^{(i)},y_i)\}_{i=1}^{D_{GPR}}$, ADR-SE kernel function $\kappa(\cdot,\cdot)$, test inputs $\boldsymbol{\Xi}_{test}^{*}:=\{\boldsymbol{\xi}_i^{*}\}_{i=1}^{D_{GPR}^{*}}$\\
	\hspace*{0.02in} {\bf Output:} 
	test mean vector $\boldsymbol{\mu}:=[\mu(\boldsymbol{\xi}_1^{*}),...,\mu(\boldsymbol{\xi}_{D_{GPR}^{*}}^{*})]^T$ and variance vector $\boldsymbol{\sigma}:=[\sigma(\boldsymbol{\xi}_1^{*}),...,\sigma(\boldsymbol{\xi}_{D_{GPR}^{*}}^{*})]^T$
	\begin{algorithmic}[1]
		\State Obtain the optimal hyperparameters $\alpha^{*}$ via the maximization problem (\ref{sec3:GPR hyperparameters});
		\State Form the covariance matrix $\boldsymbol{K}_{\boldsymbol{y}}=\boldsymbol{K}+\sigma_{GPR}^2\boldsymbol{I}_{D_{GPR}}$ and compute its inverse matrix $\boldsymbol{K}_{\boldsymbol{y}}^{-1}$;
		\State Form the matrix/vector $\boldsymbol{\kappa}:=[\boldsymbol{\kappa}({\boldsymbol{\xi}_1^{*}}),...,\boldsymbol{\kappa}({\boldsymbol{\xi}_{D_{GPR}^{*}}^{*}})]\in \mathbb{R}^{D_{GPR}\times D_{GPR}^{*}}$ and compute the test mean vector $\boldsymbol{\mu}=\boldsymbol{\kappa}^T\boldsymbol{K}_{\boldsymbol{y}}^{-1}\boldsymbol{y}$;
		\State Form the vector $\boldsymbol{\kappa}^{*}=[\kappa(\boldsymbol{\xi}_1^{*},\boldsymbol{\xi}_1^{*}),...,\kappa(\boldsymbol{\xi}_{D_{GPR}^{*}}^{*},\boldsymbol{\xi}_{D_{GPR}^{*}}^{*})]^T$ and compute the variance vector $\boldsymbol{\sigma}=\boldsymbol{\kappa}^{*}-\text{diag\_part}(\boldsymbol{\kappa}^T\boldsymbol{K}_{\boldsymbol{y}}^{-1}\boldsymbol{\kappa})$ where $\text{diag\_part}$ means taking the diagonal elements of matrix $\boldsymbol{\kappa}^T\boldsymbol{K}_{\boldsymbol{y}}^{-1}\boldsymbol{\kappa}$;
		\State \Return test mean vector $\boldsymbol{\mu}$ and variance vector $\boldsymbol{\sigma}$.
	\end{algorithmic}
	\label{sec3:algorithm_GPR}
\end{algorithm}

\subsubsection{Gaussian process regression for POD latent variables}
\label{ssec:Gaussian process regression for POD latent variables}
In this subsection, we denote the prediction function from $\Xi$ to $\mathbb{R}^k$ by $\boldsymbol{\pi}_{\boldsymbol{Z}}:\boldsymbol{\xi}\rightarrow \boldsymbol{Z}$. The following theorem can ensure that the regression can be done parallel for each entry of $\boldsymbol{Z}$.
\begin{thm}
	Each entry of POD latent variables defined in equation (\ref{sec3:pod-latent variables}) are uncorrelated under the distribution of parameter vector $\boldsymbol{\xi}$.
	\label{sec3:thm:independent pod latent variables}
\end{thm}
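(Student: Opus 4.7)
The plan is to compute $\mathrm{Cov}(Z_j, Z_l)$ for $j \neq l$ under $p_{\boldsymbol{\xi}}$ directly from the definition and exploit the fact that the POD basis vectors diagonalize the covariance operator used in the POD construction. From equation (\ref{sec3:pod-latent variables}), each entry of the POD latent vector can be written as $Z_j = \boldsymbol{v}_j^T \, u(\boldsymbol{x}_{1:M}, \boldsymbol{\xi})^T$ (treating the snapshot as a random vector indexed by $\boldsymbol{\xi}$), so that the entries are deterministic linear functionals of the random data vector.

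First, I would compute the mean $\mathbb{E}_{p(\boldsymbol{\xi})} Z_j = \boldsymbol{v}_j^T \mathbb{E}_{p(\boldsymbol{\xi})} u(\boldsymbol{x}_{1:M},\boldsymbol{\xi})^T$, then form the centered quantities and write
\begin{equation*}
\mathrm{Cov}(Z_j, Z_l) \;=\; \mathbb{E}_{p(\boldsymbol{\xi})}\!\left[ \boldsymbol{v}_j^T (u - \mathbb{E} u)^T (u - \mathbb{E} u) \boldsymbol{v}_l \right] \;=\; \boldsymbol{v}_j^T \, C_u \, \boldsymbol{v}_l,
\end{equation*}
pulling the deterministic basis vectors out of the expectation and recognising the inner expectation as precisely the covariance matrix $C_u$ defined just after equation (\ref{sec3:empirical covariance matrix}). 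Next, I would invoke the POD construction: the $\boldsymbol{v}_j$'s are eigenvectors of the (empirical) covariance matrix, and in the population version they are eigenvectors of $C_u$ with $C_u \boldsymbol{v}_l = \lambda_l \boldsymbol{v}_l$. Orthonormality of the eigenvectors of a symmetric positive semidefinite matrix then gives $\boldsymbol{v}_j^T C_u \boldsymbol{v}_l = \lambda_l \, \boldsymbol{v}_j^T \boldsymbol{v}_l = 0$ for $j \neq l$, which is exactly the uncorrelatedness claim.

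The main subtlety, and the step I would treat most carefully, is the distinction between the empirical covariance $\hat{C}_{D,u}$ actually used by Algorithm \ref{sec3:algorithm_POD} and the population covariance $C_u$ appearing under the distribution $p_{\boldsymbol{\xi}}$. For the theorem read literally under $p_{\boldsymbol{\xi}}$, one needs either to (i) identify the expectation in the statement with the empirical average over the $D$ snapshots, in which case $\boldsymbol{v}_j^T \hat{C}_{D,u} \boldsymbol{v}_l = 0$ is immediate from the eigendecomposition used to build $V_{k,M}$, or (ii) interpret the POD basis as the population POD basis (the eigenvectors of $C_u$), in which case the identity $\boldsymbol{v}_j^T C_u \boldsymbol{v}_l = 0$ again closes the argument. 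I would state the result in whichever of these two forms the authors intend and note that, up to the usual Monte Carlo error between $\hat{C}_{D,u}$ and $C_u$, the two versions agree as $D \to \infty$. This observation justifies treating the POD latent variables componentwise in the GPR recognition step, which is the only use made of the theorem in the sequel.
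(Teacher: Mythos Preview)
Your proposal is correct and rests on the same linear-algebraic fact as the paper: the covariance of $\boldsymbol{Z}$ is $V_{k,M}^T\, C\, V_{k,M}$ for the relevant covariance matrix $C$, and this is diagonal because the columns of $V_{k,M}$ are orthonormal eigenvectors of $C$. The only difference is in presentation. The paper first computes the \emph{empirical} covariance $\hat{C}_{D,\boldsymbol{Z}} = V_{k,M}^T \hat{C}_{D,u} V_{k,M} = \Lambda^2$, which is exactly diagonal by construction of the POD basis from the data, and then passes to the population covariance via the unbiasedness relation $C_{\boldsymbol{Z}} = \mathbb{E}_{p(\boldsymbol{\xi})}\big[\tfrac{D}{D-1}\hat{C}_{D,\boldsymbol{Z}}\big] = \tfrac{D}{D-1}\Lambda^2$. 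Your direct population-level computation $\mathrm{Cov}(Z_j,Z_l)=\boldsymbol{v}_j^T C_u \boldsymbol{v}_l$ is more streamlined, and your explicit discussion of the empirical-versus-population distinction is in fact more careful than the paper's own argument, which silently treats the data-dependent quantity $\Lambda^2$ as deterministic when taking the expectation --- precisely the subtlety you flag.
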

\begin{proof}
	To show that each entry of $\boldsymbol{Z}$ is uncorrelated, we only need to compute the covariance matrix $C_{\boldsymbol{Z}}=\mathbb{E}_{p_{(\boldsymbol{\xi})}}\bigg(\big(\boldsymbol{Z}-\mathbb{E}_{p_{(\boldsymbol{\xi})}}(\boldsymbol{Z})\big)\big(\boldsymbol{Z}-\mathbb{E}_{p_{(\boldsymbol{\xi})}}(\boldsymbol{Z})\big)^T\bigg)$ of $\boldsymbol{Z}$. We first compute the empirical covariance matrix $\hat{C}_{D,\boldsymbol{Z}}$,
	\begin{equation}
		\begin{aligned}
			\hat{C}_{D,\boldsymbol{Z}}&=\frac{1}{D}\sum_{i=1}^{D}(\boldsymbol{z}_i-\overline{\boldsymbol{Z}})(\boldsymbol{z}_i-\overline{\boldsymbol{Z}})^T\\
			&=\frac{1}{D}\sum_{i=1}^{D}\big(V_{k,M}^T(\boldsymbol{U^{(i)}})^T-V_{k,M}^T\overline{\boldsymbol{\mathcal{D}}}\big)\big(P_{k,N}^T(\boldsymbol{U^{(i)}})^T-P_{k,p}^T\overline{\boldsymbol{\mathcal{D}}}\big)^T\\
			&=P_{k,N}^T\hat{C}_{D,u}P_{k,N}=P_{k,N}^TP_{k,N}\Lambda^2=\Lambda^2,
		\end{aligned}
	\nonumber
	\end{equation}
	where $\overline{\boldsymbol{Z}}=\frac{1}{D}\sum_{i=1}^{D}\boldsymbol{z}_i$. Then we use the fact that $\frac{D}{D-1}\hat{C}_{D,\boldsymbol{Z}}$ is an unbiased estimation of $C_{\boldsymbol{Z}}$, i.e.,
	\begin{equation}
		C_{\boldsymbol{Z}}=\mathbb{E}_{p_{(\boldsymbol{\xi})}}\bigg(\frac{D}{D-1}\hat{C}_{D,\boldsymbol{Z}}\bigg)
		=\frac{D}{D-1}\Lambda^2.
	\nonumber
	\end{equation}
	We complete the proof since the eigenvalue matrix $\Lambda^2$ is a diagonal matrix.
\end{proof}
We know that uncorrelation is equivalent to independence for Gaussian distribution. Therefore, according to theorem \ref{sec3:thm:independent pod latent variables}, We can construct k single output GPR models $\pi_{z_j},\ j=1,...,k$ with dataset $\{\boldsymbol{\xi}^{(i)},z_{i,j}\}_{i=1}^{D}$ parallel instead of doing a multiple output GPR task $\boldsymbol{\pi}_{\boldsymbol{Z}}$.\\
The discrete form of data likelihood (\ref{sec2_eq:LE_regression}) is
\begin{equation}
	\boldsymbol{U}^{(i)}=\Psi_{\theta}(\boldsymbol{\xi}^{(i)})+\boldsymbol{\epsilon}_i,\ \boldsymbol{\epsilon}_i \sim \mathcal{N}(0,\sigma_{obs}^2\boldsymbol{I}_p),
	\label{sec3_eq:LE_discrete}
\end{equation}
where $\Psi_{\theta}(\boldsymbol{\xi})=\Psi_{\theta}(\boldsymbol{\xi};\boldsymbol{s}_{1:N})\in \mathbb{R}^{N\times 1}$. We can construct the likelihood model for latent variables $\boldsymbol{Z}$ from equation (\ref{sec3_eq:LE_discrete}) by POD projection
\begin{equation}
	\boldsymbol{z}_i=\boldsymbol{\pi}_{\boldsymbol{Z}}(\boldsymbol{\xi}^{(i)})+\boldsymbol{\epsilon}_i,\ \boldsymbol{\epsilon}_i \sim \mathcal{N}(0,\sigma_{obs}^2\boldsymbol{I}_k).
	\nonumber
\end{equation}
The noise level does not change since POD basis matrix is orthogonal. Then, the likelihood for each entry of $\boldsymbol{Z}$ is $z_{i,j}=\pi_{z_j}(\boldsymbol{\xi}^{(i)})+\epsilon_i, \ \epsilon_i\sim \mathcal{N}(0,\sigma_{obs}^2)$.
The GPR for POD latent variables can be summarized in the following algorithm.
\begin{algorithm}[H]
	\caption{GPR for POD latent variables} 
	\hspace*{0.02in} \textbf{Input:} A dataset $\{(\boldsymbol{\xi}^{(i)},\boldsymbol{z}_{i})\}_{i=1}^{D}$, ADR-SE kernel function $\kappa(\cdot,\cdot)$, parameter set $\boldsymbol{\Xi}_{test}^{*}:=\{\boldsymbol{\xi}_i^{*}\}_{i=1}^{D^{*}}$\\ 
	\hspace*{0.02in} {\bf Output:} 
	GPR for POD latent variables
	\begin{algorithmic}[1]
		\State Split the dataset into $k$ subset $\{(\boldsymbol{\xi}^{(i)},z_{i,j})\}_{i=1}^{D}, \ j=1,...,k$
		\State Implement Algorithm {\ref{sec3:algorithm_GPR}} with input $\{(\boldsymbol{\xi}^{(i)},z_{i,j})\}_{i=1}^{D},\ \kappa(\cdot,\cdot)$ and $\boldsymbol{\Xi}_{test}^{*}$ for latent variables $Z_j$ parallel. Then we can obtain the mean vectors $\boldsymbol{\mu}_j$ and variance vectors $\boldsymbol{\sigma}_j$.
		\State Form the mean matrix and covariance matrix for training data: $\boldsymbol{M}=[\boldsymbol{\mu}_1,...,\boldsymbol{\mu}_k]$ and $\boldsymbol{\Sigma}=[\boldsymbol{\sigma}_1,...,\boldsymbol{\sigma}_k]$ where the i-th row of $\boldsymbol{\Sigma}$ is the diagonal elements of covariance matrix for the i-th random vector $Z_i$.
		\State \Return GPR for POD latent variables $Z$ conditioned on parameter $\boldsymbol{\xi}$:a Gaussian distribution with mean vector $\boldsymbol{M}$ and diagonal covariance matrix with diagonal elements $\boldsymbol{\Sigma}$
	\end{algorithmic}
	\label{sec3:algorithm_GPRForPODlatentVariable}
\end{algorithm}
From this subsection, we can obtain the approximated posterior 
\begin{equation}
	q_{\boldsymbol{\phi}}(\boldsymbol{Z}|\mathcal{D},\boldsymbol{\xi};\boldsymbol{\alpha})=q_{\boldsymbol{\phi}}(\boldsymbol{Z}|\boldsymbol{z}_{1:D},\boldsymbol{\xi};\boldsymbol{\alpha})=\Pi_{j=1}^{k}q_{\phi_j}(\boldsymbol{Z}_j|z_{1:D,j},\boldsymbol{\xi},\alpha_j)
	\label{sec3_posterior:CVAE-GPRR}
\end{equation}
where $\boldsymbol{\phi}$ is the set of $\phi_j=[\boldsymbol{\mu}_j,\boldsymbol{\sigma}_j]$ and $\boldsymbol{\alpha}$ is the collection of hyperparameter vectors $\alpha_j=[\sigma_{\pi_{z_j}},l_{j,1},...,l_{j,d},\sigma_{j,GPR}]$ in each GPR model. The form of equation (\ref{sec3_posterior:CVAE-GPRR}) is similar to the mean-field assumption \cite{sec2:CVAE} in CVAE. 
\subsection{Neural Networks for likelihood model}
\label{sec:Neural Networks for likelihood model}
In Section \ref{ssec:Gaussian process regression for POD latent variables}, we project data into a low-dimensional latent space. To reconstruct the states of parametric models, it is natural to use POD basis like equation (\ref{sec3:pod-latent variables}). However, there are two main limitations of this method. First, the accuracy of prediction is highly dependent on the number of POD basis vectors $N_{POD}$. When the true model is unknown and the data is noisy, the choice of $N_{POD}$ is an art. Second, such methods can not give a prediction in the unobserved physical region. Neural networks, good function approximators that have generalization ability, is a useful tool to deal with above issues. We first define the following likelihood model
\begin{equation} U=\mu_U(\boldsymbol{Z},\boldsymbol{x},\boldsymbol{\xi})+\epsilon,\epsilon\mathop{\sim}^{i.i.d}\mathcal{N}\big(0,\sigma_{obs}^2(\boldsymbol{Z},\boldsymbol{x},\boldsymbol{\xi})\big), \ i=1,...,N.
	\label{sec3_eq:LE_proposed}
\end{equation}
There are two parameters in likelihood model, we collect them into a vector
\begin{equation} \theta=[\mu_U(\boldsymbol{Z},\boldsymbol{x},\boldsymbol{\xi}),\sigma_{obs}^2(\boldsymbol{Z},\boldsymbol{x},\boldsymbol{\xi})] \subset\mathbb{R}^2.
	\nonumber
\end{equation}
Then, A fully-connected neural network $LEN_{\Theta}$ with $L-1$ hidden layers and $Relu$ activation functions \cite{sec3:relu} is used to learn $\theta$,\\
\begin{equation}
	\begin{aligned}
		h_{1}&=Relu(W_{1,z}\boldsymbol{Z}+W_{l,\boldsymbol{s}}\boldsymbol{x}+W_{1,\boldsymbol{\xi}}\boldsymbol{\xi}+b_1),\\
		h_{l}&=Relu(W_{l-1}h_{l-1}+b_{l-1}),\ l=2,...,L-1,\\
		\hat{\theta}&=W_{L-1}h_{L-1}+b_{L-1},
	\end{aligned}
     \nonumber
\end{equation}
where $\Theta:=\{W_{1,z},W_{1,\boldsymbol{x}},W_{1,\boldsymbol{\xi}},...,W_L,b_1,b_2,...,b_L\}$ is the set including all weights and bias in the neural network. In this paper, we assume that hidden layers have the same width $N_h$. And the output vector is denoted by a vector $\hat{\theta}:=[LEN_{\Theta}^1(\boldsymbol{Z},\boldsymbol{x},\boldsymbol{\xi}),LEN_{\Theta}^2(\boldsymbol{Z},\boldsymbol{x},\boldsymbol{\xi})]\subset{\mathbb{R}^2}$. To make sure that the estimated noise level $\hat{\sigma}_{obs}^2$ is non-negative. we apply the softplus tranform   $\hat{\sigma}_{obs}^2(\boldsymbol{Z},\boldsymbol{x},\boldsymbol{\xi}):=\log\bigg(1+\exp\big(LEN_{\Theta}^2(\boldsymbol{Z},\boldsymbol{x},\boldsymbol{\xi})\big)\bigg)$ \cite{sec3:softplus}.

\begin{figure}[ht]
	\centering
	\subcaptionbox{CVAE-GPRR\label{sec3_subfig:LE_NN}}
	{\includegraphics[scale=0.32]{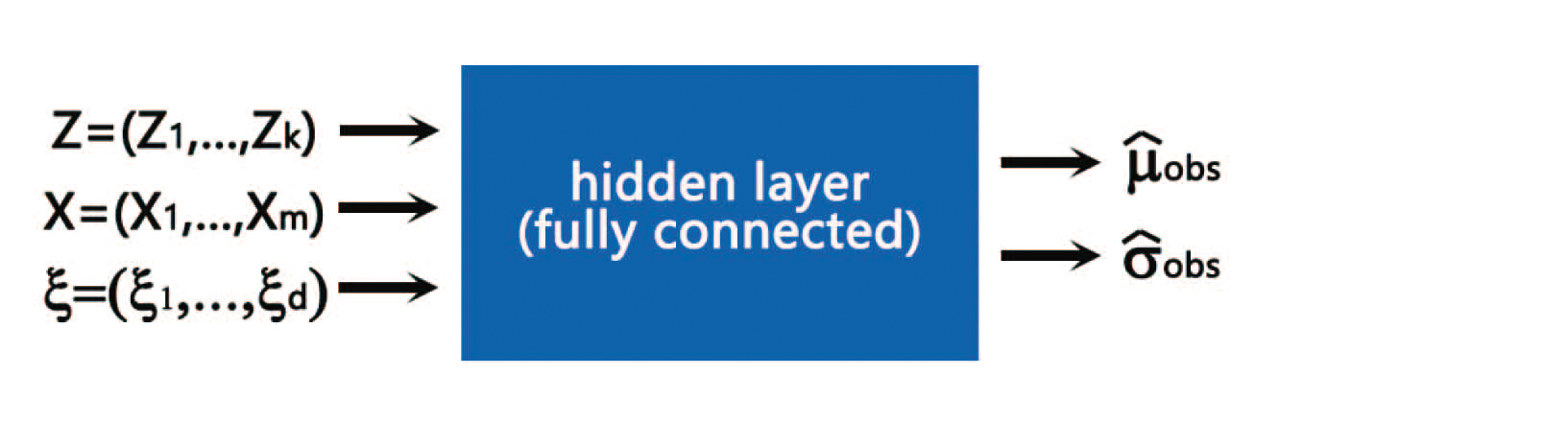}}
	\quad
	\subcaptionbox{CVAE\label{sec3_subfig:LE_CVAE}}
	{\includegraphics[scale=0.32]{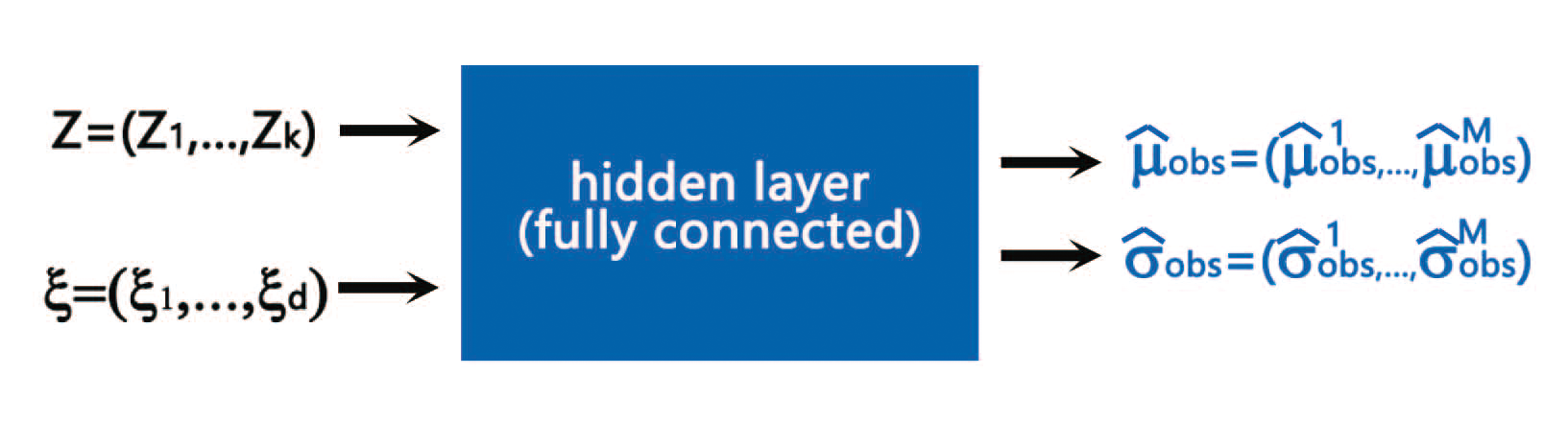}}
\caption{\small{Likelihood models for CVAE-GPRR and CVAE.}}
\label{sec3_fig:LE_NN}
\end{figure}
As shown in Figure \ref{sec3_fig:LE_NN}, the likelihood neural network of CVAE-GPRR has fewer parameters than that of CVAE. For instance, a fully-connected neural network with three hidden layers whose width is $N_h$ has $N_h*(k+m+d+2N_h+5)+2$ parameters in CVAE-GPRR while has $N_h*(k+2M+d+2N_h+3)+2M$ parameters in CVAE. Furthermore, with the same dataset $\boldsymbol{\mathcal{D}}$, the proposed method has $D\times M$ data pairs for training while CVAE only has $D$ training data. We expect better ability to deal with noisy data by using such framework, which will be further explained by numerical results.

\subsubsection{The training of likelihood model}
\label{ssec:the training of likelihood model}
In Section \ref{ssec:Conditional variational autoencoders}, we know that the model evidence can be written as
\begin{equation}
	\begin{aligned}
	\log p_{\theta}(U|\boldsymbol{x},\boldsymbol{\xi})=&E_{{\boldsymbol{\phi}}}\bigg(\log{p_{\theta}(U|\boldsymbol{z},\boldsymbol{x},\boldsymbol{\xi};\Theta)}\bigg)-D_{KL}\big(q_{q_{\boldsymbol{\phi}}}(\boldsymbol{Z}|\mathcal{D},\boldsymbol{\xi};\boldsymbol{\alpha})\Vert p_{\theta}(\boldsymbol{Z}|\boldsymbol{\xi})\big)\\
	&+D_{KL}(q_{\boldsymbol{\phi}}\big(\boldsymbol{Z}|\mathcal{D},\boldsymbol{\xi};\boldsymbol{\alpha})\Vert p_{\theta}(\boldsymbol{Z}|\mathcal{D},\boldsymbol{\xi})\big).
	\label{sec3_eq:model evidence}
	\end{aligned}
\end{equation}
In Section \ref{ssec:GPR recognition model}, we have constructed an approximate posterior $q_{\boldsymbol{\phi}}(\boldsymbol{Z}|\mathcal{D},\boldsymbol{\xi};\boldsymbol{\alpha})$ with GPR recognition model for learning distribution parameter $\boldsymbol{\phi}$. Here, the maximum likelihood method is used to estimate the parameters $\Theta$ in likelihood model, i.e.,
\begin{equation}
	\begin{aligned}
		\Theta^{*}=&\mathop{\arg\max}_{\Theta}\mathbb{E}_{p^{*}(U|\boldsymbol{x},\boldsymbol{\xi})}\bigg(\mathbb{E}_{q_{\boldsymbol{\phi}}}\big(\log p_{\theta}(U|\boldsymbol{z},\boldsymbol{x},\boldsymbol{\xi},\Theta)\big)\bigg)\\
		&\approx \mathop{\arg\min}_{\Theta} \frac{1}{2D\times M}\frac{1}{N_{MC}}\sum_{i=1}^{D\times M}\sum_{j=1}^{N_{MC}}\bigg(\frac{1}{\hat{\sigma}_{obs}^2(\boldsymbol{z}_i^{(j)},\boldsymbol{x}^{(i)},\boldsymbol{\xi}^{(i)})}\big(u(\boldsymbol{x}^{(i)},\boldsymbol{\xi}^{(i)})\\
		&-LEN_{\Theta}^1(\boldsymbol{z}_i^{(j)},\boldsymbol{x}^{(i)},\boldsymbol{\xi}^{(i)})\big)^2+\log \big(2\pi \hat{\sigma}_{obs}(\boldsymbol{z}_i^{(j)},\boldsymbol{x}^{(i)},\boldsymbol{\xi}^{(i)})\big)\bigg),\ \boldsymbol{z}_i^{(j)} \sim q_{\boldsymbol{\phi}}(\boldsymbol{Z}|\mathcal{D},\boldsymbol{\xi};\boldsymbol{\alpha}) .
	\end{aligned}
	\label{sec3_eq:loss fuction}
\end{equation}
The last two terms on the right-hand side of equation (\ref{sec3_eq:model evidence}) are dropped because they are independent of the optimizer. By using gradient-based optimization algorithm such as Adam \cite{sec3:Adam} and backpropagation of neural networks, we can easily train the likelihood model (\ref{sec3_eq:LE_proposed}) by solving the optimization problem (\ref{sec3_eq:loss fuction}).

\subsection{Uncertainty quantification}
\label{ssec:Uncertainty quantification}
Compared to deterministic regression methods, our method can give some uncertainty quantification for the predictive results \cite{sec3:UQ}.

Once the model is trained, the model evidence for any input $\hat{\boldsymbol{x}},\hat{\boldsymbol{\xi}}$ can be approximated by
\begin{equation}
	p_{\theta}(\hat{U}|\boldsymbol{\mathcal{D}},\hat{\boldsymbol{x}},\hat{\boldsymbol{\xi}})\approx\int p_{\theta}(\hat{U}|\hat{\boldsymbol{z}},\hat{\boldsymbol{x}},\hat{\boldsymbol{\xi}})q_{\boldsymbol{\phi}}(\hat{\boldsymbol{Z}}|\boldsymbol{\mathcal{D}},\hat{\boldsymbol{\xi}};\boldsymbol{\alpha})dZ.
	\nonumber
\end{equation}
The predictive mean is
\begin{equation}
	\begin{aligned}
		\mathbb{E}_{p_{\theta}(\hat{U}|\boldsymbol{\mathcal{D}},\hat{\boldsymbol{x}},\hat{\boldsymbol{\xi}})}(\hat{U}|\hat{\boldsymbol{x}},\hat{\boldsymbol{\xi}})
		&\approx \mathbb{E}_{q_{\boldsymbol{\phi}}}\bigg(\mathbb{E}_{p_{\theta}(\hat{U}|\hat{\boldsymbol{z}},\hat{\boldsymbol{x}},\hat{\boldsymbol{\xi}})}(\hat{U}|\hat{\boldsymbol{z}},\hat{\boldsymbol{x}},\hat{\boldsymbol{\xi}})\bigg)\\
		&=\mathbb{E}_{q_{\boldsymbol{\phi}}}\bigg(\mu_u(\hat{\boldsymbol{Z}},\hat{\boldsymbol{x}},\hat{\boldsymbol{\xi}})\bigg)\\
		&\approx \frac{1}{M}\sum_{i=1}^M LEN_{\Theta^{*}}^1(\hat{\boldsymbol{z}}^{(i)},\hat{\boldsymbol{x}},\hat{\boldsymbol{\xi}}), \ \hat{\boldsymbol{z}}^{(i)}\sim q_{\boldsymbol{\phi}}(\hat{\boldsymbol{Z}}|\boldsymbol{\mathcal{D}},\hat{\boldsymbol{\xi}};\boldsymbol{\alpha}).
	\end{aligned}
	\label{sec3_eq:mymethod_mean}
\end{equation}
The predictive variance can be obtained by the law of total variance
\begin{equation}
	\begin{aligned}
		&Var_{p_{\theta}(\hat{U}|\boldsymbol{\mathcal{D}},\hat{\boldsymbol{x}},\hat{\boldsymbol{\xi}})}(\hat{U}|\hat{\boldsymbol{x}},\hat{\boldsymbol{\xi}})\\
		&\approx \mathbb{E}_{q_{\boldsymbol{\phi}}}\bigg(Var_{p_{\theta}(\hat{U}|\hat{\boldsymbol{z}},\hat{\boldsymbol{x}},\hat{\boldsymbol{\xi}})}(\hat{U}|\hat{\boldsymbol{z}},\hat{\boldsymbol{x}},\hat{\boldsymbol{\xi}})\bigg)+Var_{q_{\boldsymbol{\phi}}}\bigg(\mathbb{E}_{p_{\theta}(\hat{U}|\hat{\boldsymbol{z}},\hat{\boldsymbol{x}},\hat{\boldsymbol{\xi}})}(\hat{U}|\hat{\boldsymbol{z}},\hat{\boldsymbol{x}},\hat{\boldsymbol{\xi}})\bigg)\\
		&=\mathbb{E}_{q_{\boldsymbol{\phi}}}\bigg(\sigma_{obs}^2(\hat{\boldsymbol{Z}},\hat{\boldsymbol{x}},\hat{\boldsymbol{\xi}})\bigg)+Var_{q_{\boldsymbol{\phi}}}\bigg(\mu_u(\hat{\boldsymbol{Z}},\hat{\boldsymbol{x}},\hat{\boldsymbol{\xi}})\bigg)\\
		&=\mathbb{E}_{q_{\boldsymbol{\phi}}}\bigg(\sigma_{obs}^2(\hat{\boldsymbol{Z}},\hat{\boldsymbol{x}},\hat{\boldsymbol{\xi}})+\mu_u^2(\hat{\boldsymbol{Z}},\hat{\boldsymbol{x}},\hat{\boldsymbol{\xi}})\bigg)-\mathbb{E}_{q_{\boldsymbol{\phi}}}^2\bigg(\mu_u(\hat{\boldsymbol{Z}},\hat{\boldsymbol{x}},\hat{\boldsymbol{\xi}})\bigg)\\
		&\approx \frac{1}{M}\sum_{i=1}^M \hat{\sigma}_{obs}^{2*}(\hat{\boldsymbol{z}}^{(i)},\hat{\boldsymbol{x}},\hat{\boldsymbol{\xi}})+\bigg(LEN_{\Theta^{*}}^1(\hat{\boldsymbol{z}}^{(i)},\hat{\boldsymbol{x}},\hat{\boldsymbol{\xi}})\bigg)^2-\bigg(\frac{1}{M}\sum_{i=1}^M LEN_{\Theta^{*}}^1(\hat{\boldsymbol{z}}^{(i)},\hat{\boldsymbol{x}},\hat{\boldsymbol{\xi}})\bigg)^2,
	\end{aligned}
	\label{sec3_eq:mymethod_var}
\end{equation}
where $\hat{\boldsymbol{z}}^{(i)}\sim q_{\boldsymbol{\phi}}(\hat{\boldsymbol{Z}}|\boldsymbol{\mathcal{D}},\hat{\boldsymbol{\xi}};\boldsymbol{\alpha})$ and $\hat{\sigma}_{obs}^{2*}(\boldsymbol{Z},\boldsymbol{x},\boldsymbol{\xi}) =\log\bigg(1+\exp\big(LEN_{\Theta^{*}}^2(\boldsymbol{Z},\boldsymbol{x},\boldsymbol{\xi})\big)\bigg)$.

\subsection{Comparison with Conditional Variational Autoencoder}
\label{ssec:Comparison with Conditional Variational Autoencoder}
\begin{figure}[ht]
	\centering
	\subcaptionbox{CVAE-GPRR\label{sec3_subfig:framework_Proposed}}
	{\includegraphics[scale=0.45]{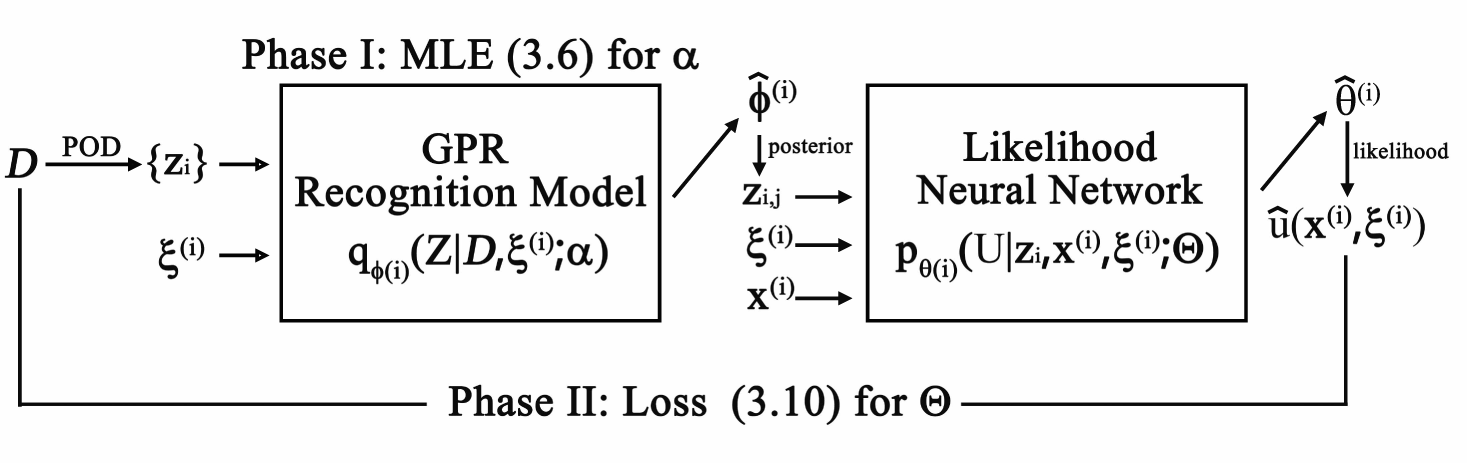}}
	\quad
	\subcaptionbox{CVAE\label{sec3_subfig:framework_CVAE}}
	{\includegraphics[scale=0.45]{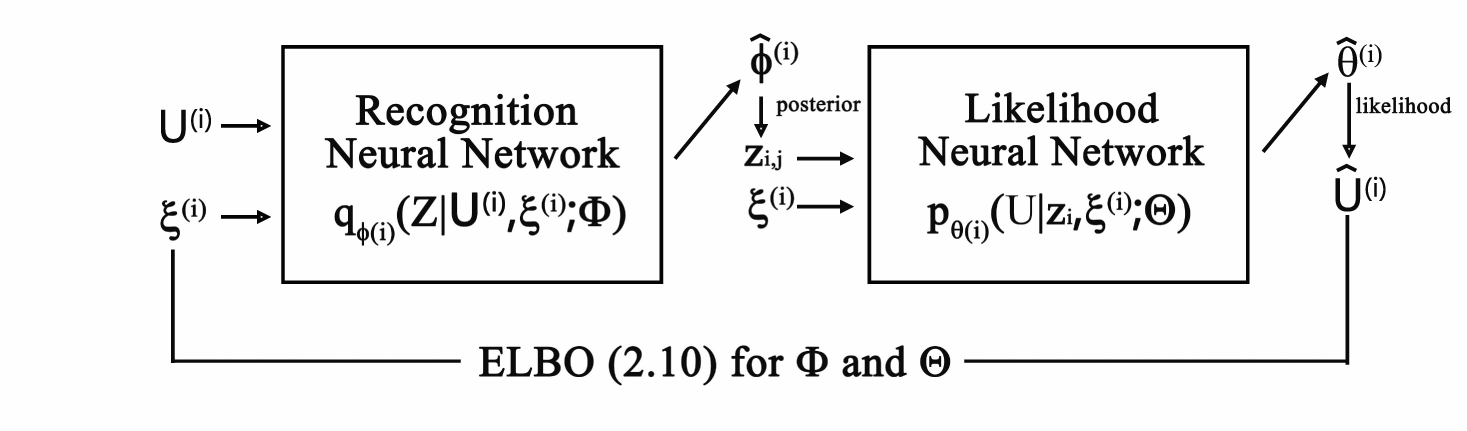}}
	\caption{Schemas of CVAE-GPRR and CVAE.}
	\label{fig_framework}
\end{figure}
In the last section, we will discuss the connections and differences between CVAE-GPRR and CVAE.
Figure \ref{fig_framework} presents the schemas of CVAE-GPRR and CVAE. We note that both methods consist of two models: recognition model and likelihood model. The aim of the recognition model is to filter redundant information from data and denoise them at the same time. CVAE-GPRR achieves this goal in a more interpretable way. We first use POD to obtain the principle components and then project data into latent space with POD basis. Therefore, we can get observations of POD latent variables $z$ although we call them as latent variables, which is different from that of CVAE. Next, GPR is used to alleviate the influence of noise and learn the mapping from parameters to POD latent variables (see Figure \ref{sec3_fig:GPR_effect}). However, CVAE packed all these processes in a black box — recognition neural network. In addition, our recognition model has fewer parameters by utilizing non-parametric method GPR, which leads to lower optimization complexity. The likelihood model is used to reconstruct data from latent space. Compared with CVAE, our framework can make prediction in the unobserved physical region by adding physical variables into the inputs of likelihood neural networks. However, CVAE needs retraining or interpolation to achieve this goal.
\begin{figure}[htbp]
	\setlength{\abovecaptionskip}{0.cm}		
	\centering		
	\includegraphics[scale=0.4]{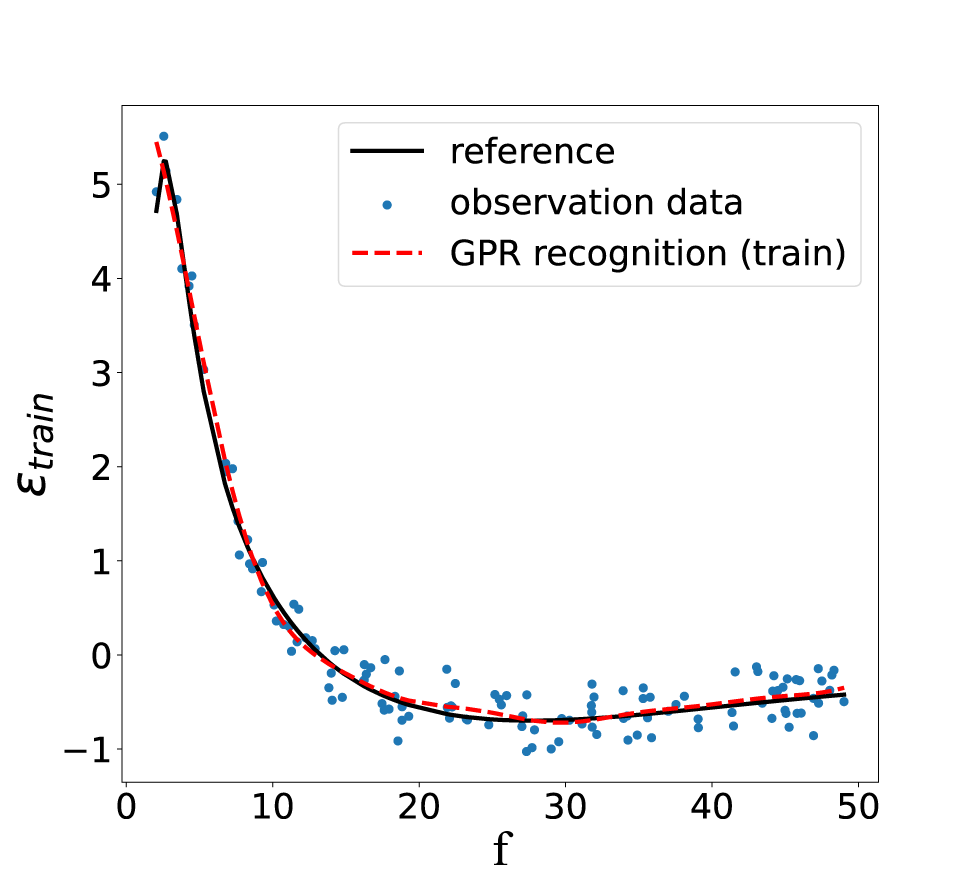}		
	\caption{\small{GPR can alleviate the influence of noise (1D real Morlet wavelet function, see Section\ref{sec4_eq:morlet}): GPR for the first POD latent variable with $n=3$. The reference and observations are obtained by projecting the truth and observation data into the feature space spanned by POD basis of the snapshot $\boldsymbol{\mathcal{S}}$.}}	
	\label{sec3_fig:GPR_effect}	
\end{figure}
For test, CVAE use the prior of latent variables to generate new samples of data, which is not consistent with training process that pass the samples drawn from posterior through the likelihood model. Sometimes, this inconsistency will lead to worse generalization performance \cite{intro:CVAE}. As shown in Figure \ref{fig_framework}, CVAE-GPRR can first approximate posterior using GPR recognition and we can thus avoid above inconsistency for CVAE. From this point of view, we can understand our loss function in another way. Simply let prior equal to posterior in ELBO (\ref{sec2:CVAE_ELBO}), we can also arrive at optimization problem (\ref{sec3_eq:loss fuction}). And then the results of GPR recognition model can be seen as an informative prior of latent variable in DLVM (\ref{sec2_eq:LVM}).
\section{Numerical results}
\label{sec:Numerical results and discussions}
In this section, the numerical results for various parametric  models  are presented to show the accuracy and efficiency of CVAE-GPRR, which includes the generalization ability with respect to  parameters and  physical variables and the results of uncertainty quantification. In each numerical example, training data are generated by solving  the given models, which are assumed known for having a reference truth. The observation data is equal to the truth with  Gaussian noise perturbation.

In Section \ref{ssec:1D real morlet wavelet function}, CVAE-GPRR is used to a simple 1D real morlet wavelet function with two shape parameters.  Based on the discussions in Section \ref{ssec:Comparison with CVAE}, we further compare the proposed method with CVAE numerically. To show the advantages of CVAE-GPRR over GPR-based ROM, we repeat the experiments given different numbers of POD latent variables for these two methods. In Section \ref{ssec:2D parametric diffusion problem}, we show the numerical results of a diffusion problems with three parameters in diffusion coefficients and compare the predictive results for different priors of POD latent variables. In Section \ref{ssec:2D parametric p-Laplacian equation}, we collect data from p-Laplacian equations, which are nonlinear equations, with physical parameter $p$ and additive three parameters in the forcing term. In this test case, we also show the influence of the number of training data. In Section \ref{ssec:2D skewed lid-driven cavity in parametric region}, we use the data simulated from a skewed lid-driven cavity problem to train CVAE-GPRR. Compared to above three test cases, the relationship between parameters and state variables is not very direct since the parameters are set in the physical region.
\subsection{1D real morlet wavelet function}
\label{ssec:1D real morlet wavelet function}
Consider the 1D real morlet wavelet function, which is defined as the product of a cosine wave and a Gaussian window:
\begin{equation}
	u(x,\boldsymbol{\xi})=\cos(2\pi fx)\exp(\frac{-x^2}{2h^2}),
	\label{sec4_eq:morlet}
\end{equation}
where $x$ is time in seconds. To avoid introducing a phase shift, the region of $x$ is centered at the coordinate origin. Here, we limit $x$ from -1 to 1 seconds. The morlet wavelets (\ref{sec4_eq:morlet}) are controlled by two parameters: frequency $f(Hz)$ and the width of the Gaussian $h$. In practical application, the second parameter is further defined as
\begin{equation}
	h=\frac{n}{2\pi f},
	\nonumber
\end{equation}
where the integer n is often called the “number of cycles” that determine the time-frequency precision trade-off \cite{sec4:morletfunction}. In the following experiment, we take the value of $n$ ranging from 2 to 5 over frequencies between 2 Hz and 50 Hz, which is within the typical setting for neurophysiology data such as EEG, MEG and LFP \cite{sec4:morletfunction}. Now, we introduce the parameter vector
\begin{equation}
	\boldsymbol{\xi}=[f,n]\in \Xi:=[2,50]\times\{2,3,4,5\}\subset \mathbb{R}\times\mathbb{N},
	\label{sec4_eq:morlet_parameter_space}
\end{equation}

Examples of functions for two different parameters are shown in Figure \ref{sec4_fig:morlet_example}.

\begin{figure}[htbp]	
	\centering		
	\includegraphics[scale=0.2]{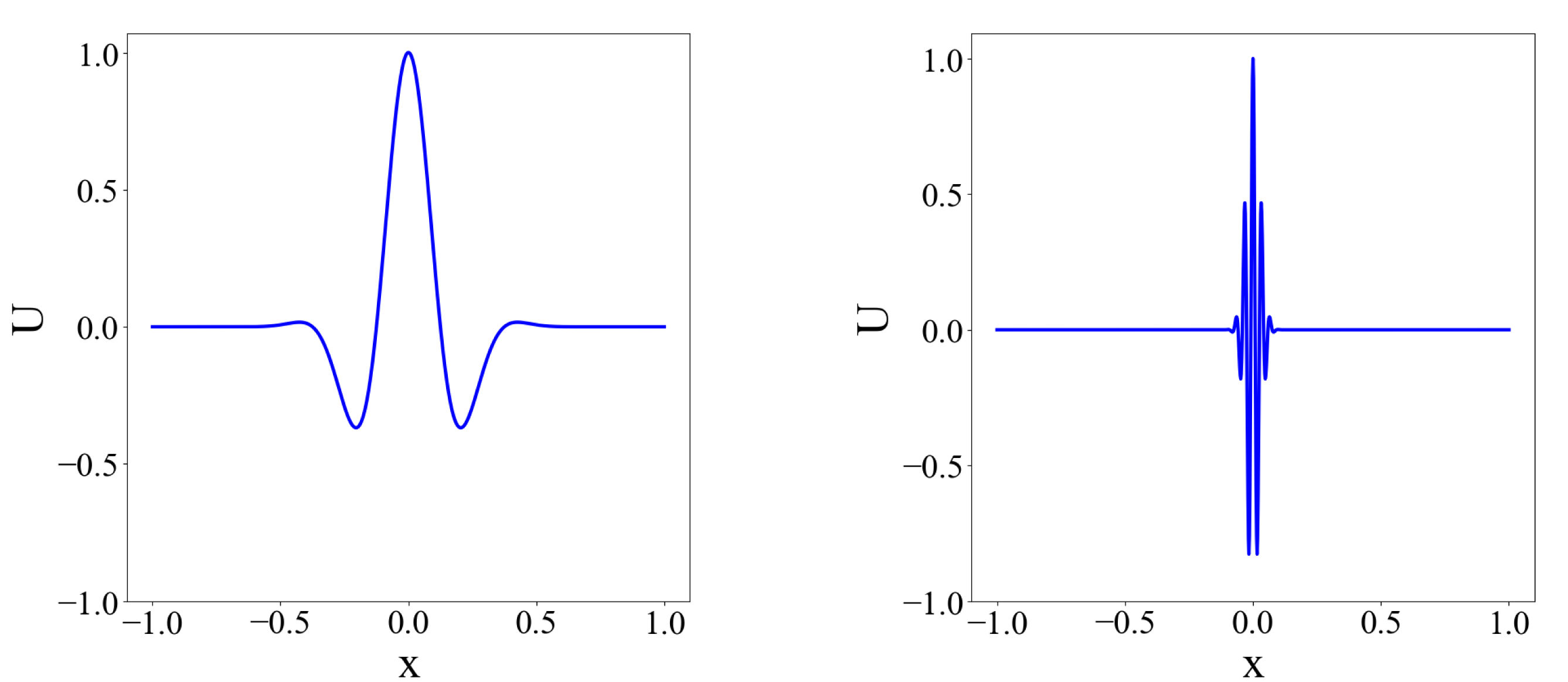}		
	\caption{\small{Examples of functions defined as equation (\ref{sec4_eq:morlet}) for the parameters  $\boldsymbol{\xi}^{(1)}=(2.,2)$ (left) and $\boldsymbol{\xi}^{(2)}=(30.,5)$} (right).}	
	\label{sec4_fig:morlet_example}	
\end{figure}

In the numerical experiment we divide the time region $\Omega:=[-1,1]$ into 500 equal intervals and the parameters are sampled from uniform distribution in $\Xi$. We collect  snapshots of $u$ for 1000 different parameter values, half of which are selected as training data and the rest are used to test the generalization ability of CVAE-GPRR. The training set is also used to learn GPR recognition model besides being used for training the likelihood model. To simulate the realistic signal, the white Gaussian noise is added to data as sensor noise \cite{sec4:eeg_noise}.

In this test case, ten POD latent variables are chosen by performing Algorithm \ref{sec3:algorithm_GPRForPODlatentVariable}. The hyperparameters for training likelihood model are given in the first row of Table \ref{sec4_exper1:morlet_table_hyper}. During training, we gradually reduce the learning rate to improve the training accuracy. Specifically, we first train the model with 0.001 learning rate for 100 epoches and then divide the learning rate by 10 every 50 epoches.

\begin{table}[htbp]
	\centering  
	\setlength{\abovecaptionskip}{0cm}
	\setlength{\belowcaptionskip}{0.2cm}
	\caption{Hyperparameters of likelihood model for 1D real morlet wavelet function.}  
	\label{sec4_exper1:morlet_table_hyper}  
	\scalebox{0.8}{
		\begin{tabular}{cccccc}
			\toprule  
			Model & Layer structure & Optimizer & Learning rate & Epoch & Batch size \\  
			\midrule
			& & & & & \\[-6pt]  
			CVAE-GPRR &$N_{POD}$+3-100-100-100-100-2 &\multirow{2}*{Adam} & \multirow{2}*{0.001-0.0001-0.00001} & \multirow{2}*{100-50-50} & \multirow{2}*{1000} \\
			\cline{1-2}
			& & & & & \\[-6pt]  
			Discrete likelihood model &12-100-100-100-100-1002 & & & & \\
			\bottomrule
		\end{tabular}
	}
\end{table}

\subsubsection{Comparison with conditional variational autoencoder}
\label{ssec:Comparison with CVAE}
In Section \ref{ssec:Comparison with Conditional Variational Autoencoder}, we have compared CVAE-GPRR with CVAE in details. In this subsection, we will further illustrate the influence of different designs of recognition model and likelihood model by numerical experiments.\\
\hspace*{\fill} \\
\noindent\textbf{The efficacy of likelihood model.}
We first keep recognition model the same and compare different framework of likelihood model. As shown in Figure \ref{sec3_subfig:LE_NN}, the physical variables $\boldsymbol{x}$ (time $x$ in this subsection have the same role) are added into the inputs. Thus, the outputs are continuous with respect to $\boldsymbol{x}$. However, some existing frameworks \cite{sec4:framework1,sec4:framework2} including CVAE design the likelihood model as a reconstruction of data, namely, the outputs approximate the values of the system at some discrete points in the physical region $\Omega$ (see Figure \ref{sec3_subfig:LE_CVAE}). To distinguish this framework from the proposed one, we call it discrete likelihood model and its hyperparameters for training are demonstrated in the second row of Table \ref{sec4_exper1:morlet_table_hyper}. For comparison, Figure \ref{sec4_exper1:fig_framework} displays the predictive mean of these two frameworks at one test instance of the parameters. As can be seen from the figure, CVAE-GPRR can better recognize the original signal from noisy data, which is more obvious as the noise level increases.
\begin{figure}[htbp]		
	\centering		
	\includegraphics[scale=0.15]{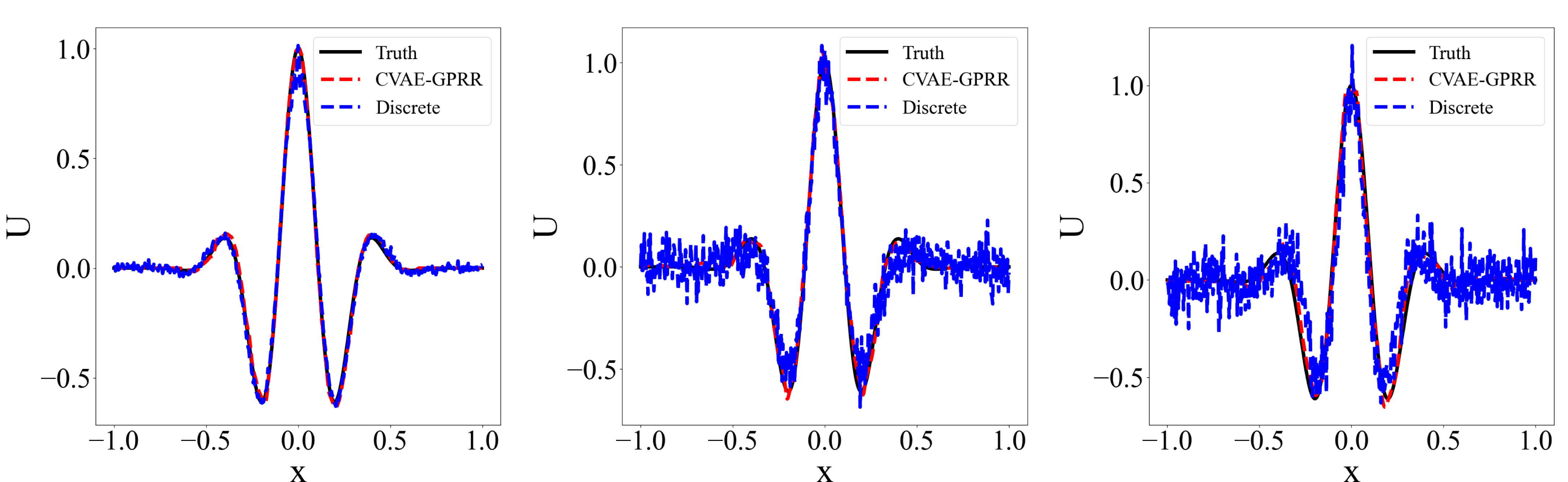}		
	\caption{\small{1D real morlet test case: the predictive mean of CVAE-GPRR and discrete likelihood model at the test instance of the parameters $\boldsymbol{\xi}=(2.28,3)$. From left to right, the noise levels are 0.01,0.1,0.2, respectively.}}	
	\label{sec4_exper1:fig_framework}	
\end{figure}
Furthermore, to illustrate numerical results quantitatively, we define the relative test mean error $\epsilon_{test}$ as follows, which is also used in the other experiments.
\begin{equation}
	\epsilon_{test}=\frac{1}{N_{test}}\sum_{i=1}^{N_{test}}\frac{\Vert\boldsymbol{\mu}_{NN}(\boldsymbol{x}_{1:M},\boldsymbol{\xi}^{(i)})-\boldsymbol{U}^{(i)}\Vert_2}{\Vert\boldsymbol{U}^{(i)}\Vert_2},
	\label{sec4_eq:relative test mean error}
\end{equation}
where $N_{test}$ is the size of test set and $\Vert\cdot\Vert_2$ is vector 2-norm. The relative test mean errors of two frameworks are reported in Table \ref{sec4_exper1:table_error_framework}, which illustrates that the prediction ability of CVAE-GPRR with respect to parameters has an significant advantage over that of discrete likelihood model especially when the noise level is large. As mentioned in Section \ref{ssec:Comparison with Conditional Variational Autoencoder}, the discrete likelihood model of CVAE has more neural network parameters but fewer training data pairs than those of CVAE-GPRR if the width of neural networks are the same, which may lead to overfitting, i.e., the model wrongly learns the noise in training data (as shown in Figure \ref{sec4_exper1:fig_framework}). Extra regularization tricks are needed. In addition, our method can also predict the values of morlet functions in the unobserved time region without retraining the model while the discrete likelihood model can only estimate function values at fixed points in $\Omega$. We partition the time region by 1000 equal intervals (It is called fine grid and the original partition is called coarse grid.) and compute the corresponding relative test mean error that is recorded in the fourth column of Table \ref{sec4_exper1:table_error_framework}.

\begin{table}[htbp]
	\centering  
	\setlength{\abovecaptionskip}{0cm}
	\setlength{\belowcaptionskip}{0.2cm}
	\caption{1D real morlet test case: comparison of relative mean errors for CVAE-GPRR and discrete likelihood model.}  
	\label{sec4_exper1:table_error_framework}  
	\begin{threeparttable}
		\scalebox{0.8}{
			\begin{tabular}{c|ccc}
				\toprule  
				\diagbox[width=4cm,height=1.5cm]{Noise level}{$\epsilon_{test}$}{Model} & \tabincell{c}{CVAE-GPRR\\(coarse grid)} &\tabincell{c}{Discrete likelihood model\\(coarse grid)} & \tabincell{c}{CVAE-GPRR\\(fine grid)}\\   
				\hline
				& & & \\[-6pt]  
				0.01 &$3.0504\times 10^{-2}$ &$3.6333\times 10^{-2}$ & $3.8213\times 10^{-2}$ \\
				& & & \\[-6pt]  
				0.1 & $5.8427\times 10^{-2}$ &$3.0955\times 10^{-1}$ & $6.1505\times 10^{-2}$ \\
				& & & \\[-6pt]  
				0.2 & $7.9915\times 10^{-2}$ &$4.4050\times 10^{-1}$ & $8.1501\times 10^{-2}$ \\
				\bottomrule
		\end{tabular}}
	\end{threeparttable}       
\end{table}

\hspace*{\fill} \\
\noindent\textbf{The efficacy of Gaussian process regression recognition.}
In this subsection, the layer structure of likelihood model for CVAE are designed to be the same as that of CVAE-GPRR (see Table \ref{sec4_exper1:morlet_table_hyper}) in the following experiment. And the layer structure of the recognition model and the likelihood model in CVAE are designed symmetrical. We compare the performance of GPR recognition model in CVAE-GPRR with neural network recognition model in CVAE. The relative test mean errors are reported in Table \ref{sec4_exper1:table_error_approach}. However,in this test case, the proposed method has similar performance to that of CVAE while the training complexity of CVAE-GPRR is lower than that of CVAE due to fewer model parameters and parallel GPR of POD latent variables (see analysis in Section \ref{sec:Neural Networks for likelihood model}). 
In addition, we me

\subsubsection{Comparison with GPR-based reduced order modeling method}
 \noindent In this section, we compare the predictive performance with a GPR-based ROM \cite{intro:GPR-based RB}, which reconstructs observations by directly using POD basis. Since the accuracy of GPR-based ROM are dependent on the number of POD latent variables $N_{POD}$, we let $N_{POD}$ equal $1,2,3,4,5,10,20,30,40$, respectively and record the minimal $\epsilon_{test}$. From Table \ref{sec4_exper1:table_error_approach}, it can be found that GPR-based ROM can not work well when the noise level is high.\par
\begin{table}[htbp]
	\centering  
	\setlength{\abovecaptionskip}{0cm}
	\setlength{\belowcaptionskip}{0.2cm}
	\caption{1D real morlet test case: comparison of relative test mean errors for different approaches on the coarse grid.}  
	\label{sec4_exper1:table_error_approach}  
	\begin{threeparttable}
		\scalebox{0.8}{
			\begin{tabular}{c|ccc}
				\toprule  
				\diagbox[width=4cm,height=1.5cm]{Noise level}{$\epsilon_{test}$}{Method} & \tabincell{c}{CVAE-GPRR\\($N_{POD}=10$)} &CVAE & GPR-based ROM\\   
				\hline
				& & & \\[-6pt]  
				0.01 &$3.0504\times 10^{-2}$ &$2.3852\times 10^{-2}$ & $2.8143\times 10^{-2}(N_{POD}=20)$ \\
				& & & \\[-6pt]  
				0.1 & $5.8427\times 10^{-2}$ &$5.2660\times 10^{-2}$ & $1.6900\times 10^{-1}(N_{POD}=20)$ \\
				& & & \\[-6pt]  
				0.2 & $7.9915\times 10^{-2}$ &$8.6693\times 10^{-2}$ & $2.9494\times 10^{-1}(N_{POD}=10)$ \\
				\bottomrule
		\end{tabular}}
	\end{threeparttable}       
\end{table}
To further explain the influence of the number of POD latent variables of our method, we train the likelihood model with $N_{POD}=1,2,3,4,5,10,20,30$, respectively. It is shown in Table \ref{sec4_exper1:table_n_pod} that the accuracy of CVAE-GPRR is not sensitive to the number of POD latent variables since the proposed method does not use POD basis to reconstruct data, which is necessary for GPR-based ROM. We thus do not need to choose $N_{POD}$ carefully and the proposed method is more practical when the truth is not available.
\begin{table}[htbp]
	\centering  
	\setlength{\abovecaptionskip}{0cm}
	\setlength{\belowcaptionskip}{0.2cm}
	\caption{1D real morlet test case (noise level: 0.01): comparison of relative test mean errors for CVAE-GPRR and GPR-based ROM.}
	\label{sec4_exper1:table_n_pod}  
	\scalebox{0.6}{
		\begin{tabular}{c|cccccccc}
			\toprule  
			\diagbox[width=4cm,height=1.5cm]{Method}{$\epsilon_{test}$}{$N_{POD}$} & 1  & 2 & 3 & 4 & 5 & 10 & 20 & 30\\   
			\hline
			& & & & & & & &\\[-6pt]  
			CVAE-GPRR &$3.1163\times 10^{-2}$ &$2.5336\times 10^{-2}$ &$3.4603\times 10^{-2}$&$2.6987\times 10^{-2}$ &$3.5202\times 10^{-2}$ & $3.0504\times 10^{-2}$ &$3.4890\times 10^{-2}$ &$2.9820\times 10^{-2}$\\
			& & & & & & & &\\[-6pt]  
			GPR-based ROM&$6.5863\times 10^{-1}$ &$4.9926\times 10^{-1}$ &$3.8576\times 10^{-1}$&$3.0430\times 10^{-1}$ &$2.2433\times 10^{-1}$ & $8.0198\times 10^{-2}$ &$2.8143\times 10^{-2}$ &$2.8276\times 10^{-2}$\\
			\bottomrule
	\end{tabular}}
\end{table}

\subsection{Parametric diffusion problem}
\label{ssec:2D parametric diffusion problem}
In the second test case, we consider a parametric system described by a second-order diffusion problem \cite{sec4:framework1} of the form:
\begin{equation}
	\left\{
	\begin{aligned}
		&-div\big(\kappa(\boldsymbol{x},\boldsymbol{\xi})\nabla u(\boldsymbol{x},\boldsymbol{\xi})\big)=1 \ &in \ \Omega,\\
		&u(\boldsymbol{x},\boldsymbol{\xi})=1 \ &on \  \partial{\Omega}_D^1, \\
		&u(\boldsymbol{x},\boldsymbol{\xi})=0 \ &on \ \partial{\Omega}_D^0,\\
		&\frac{\partial{u}}{\partial{\vec{n}}}=0 \ &on \ \partial{\Omega}_N,\\	
	\end{aligned}
	\right.
	\label{sec4_exper2:equation_definition}
\end{equation}
where physical region $\Omega=(0,1)^2\subset \mathbb{R}^2$ describes a square beam and $u$ is the temperature of the beam. The following boundaries are defined
\begin{equation}
	\partial{\Omega}_N=\{\boldsymbol{x}:=(x_1,x_2)\in\partial{\Omega}:x_2=1\}, \ \partial{\Omega}_D^1=\{\boldsymbol{x}\in\partial{\Omega}:x_2=0\},\ \partial{\Omega}_D^0=\partial{\Omega}\backslash\partial{\Omega}_N\backslash \partial{\Omega}_D^1,
	\nonumber
\end{equation}
that consist of the whole boundary, i.e., $\partial{\Omega}=\partial{\Omega}_N\bigcup \partial{\Omega}_D^1 \bigcup \partial{\Omega}_D^0$. The diffusion coefficient is parametrized as follows
\begin{equation}
	\kappa(\boldsymbol{x},\boldsymbol{\xi})=\xi_3+\frac{1}{\xi_3}exp(-\frac{\sum_{i=1}^2(x_i-\xi_i)^2}{\xi_3}),\ \boldsymbol{\xi}=(\xi_1,\xi_2,\xi_3),
	\nonumber
\end{equation}
where the parameter vector $\boldsymbol{\xi}$ uniformly takes values on\par
\begin{equation}
	\Xi:=[0.4,0.6]^2\times[0.05,0.1]\subset\mathbb{R}^3.
\end{equation}
\indent We generate a uniform mesh of physical region $\Omega$ with $101 \times 101$ nodes (fine mesh) and finite element method (FEM) with bilinear polynomials is used to obtain reference solutions for 1000 different parameters, which are divided into training and test sets with an 600-400 split. Examples of solutions for three different parameters are displayed in Figure \ref{sec4_exper2:fig_equation_eg}. Since CVAE-GPRR has generalization ability with respect to physical variables $\boldsymbol{x}$, we only take values of solutions on $51\times 51$ nodes (coarse mesh) of the fine mesh for training. Besides, the FEM solutions are disturbed by a white Gaussian noise and twenty POD latent variables are chosen.

\begin{figure}[ht]		
	\centering		
	\includegraphics[scale=0.30]{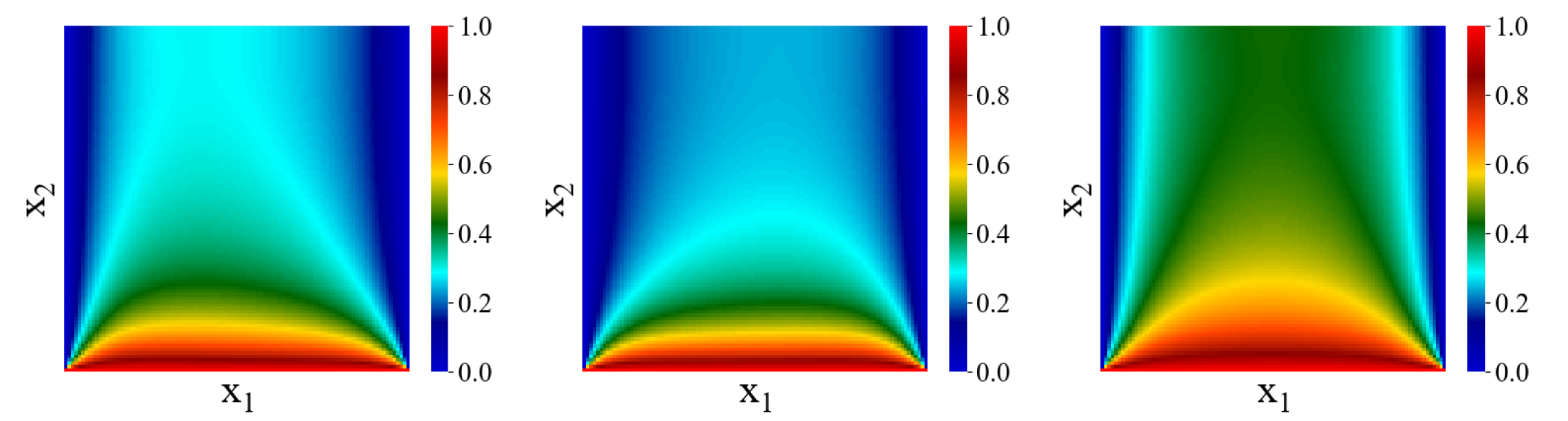}		
	\caption{\small{Examples of solutions (fine mesh) of the diffusion problem (\ref{sec4_exper2:equation_definition}) for the parameters  $\boldsymbol{\xi}^{(1)}=(0.44,0.54,0.10)$(left), $\boldsymbol{\xi}^{(2)}=(0.53,0.43,0.07)$}(middle) and  $\boldsymbol{\xi}^{(3)}=(0.46,0.60,0.06)$(right).}	
	\label{sec4_exper2:fig_equation_eg}	
\end{figure}

In this test case, the likelihood model are trained by the following hyperparameters in Table \ref{sec4_exper2:table_hyper}.

\begin{table}[ht]
	\centering  
	\setlength{\abovecaptionskip}{0cm}
	\setlength{\belowcaptionskip}{0.2cm}
	\caption{Hyperparameters of likelihood model for the parametric diffusion problem.}  
	\label{sec4_exper2:table_hyper}  
	\scalebox{0.8}{
		\begin{tabular}{ccccc}
			\toprule  
			Layer structure & Optimizer & Learning rate & Epoch & Batch size \\  
			\midrule
			& & & &  \\[-6pt]  
			25-200-200-200-200-2&Adam&0.0001&100&1000 \\
			\bottomrule
		\end{tabular}
	}
\end{table}

For a new parameter vector, we can estimate the mean (\ref{sec3_eq:mymethod_mean}) of the corresponding solution straightforwardly by calculating GPR prediction and taking them and physical variables into the trained likelihood model, which avoids solving a large algebra equation that is necessary for FEM. For 400 unobserved test parameter vectors, we estimate the means of their solutions with 500 POD latent variables samples and solve the corresponding equations by FEM on NVIDIA TESLA P100 GPUs. The average computational time of CVAE-GPRR over test set is $\textbf{0.396s}$, which is one hundred times faster than that of FEM ($\textbf{101.628s}$). Figure \ref{sec4_exper2:fig_equation_result} illustrates that CVAE-GPRR can predict the solution of equation for unobserved parameters and physical variables with a tolerable accuracy loss even though the training data is noisy. The predictive variance (\ref{sec3_eq:mymethod_var}) of CVAE-GPRR is also displayed in the right-most column Figure \ref{sec4_exper2:fig_equation_result}, which increases as noise level increases and is close to observation noise.

\begin{figure}[htbp]		
	\centering		
	\includegraphics[scale=0.21]{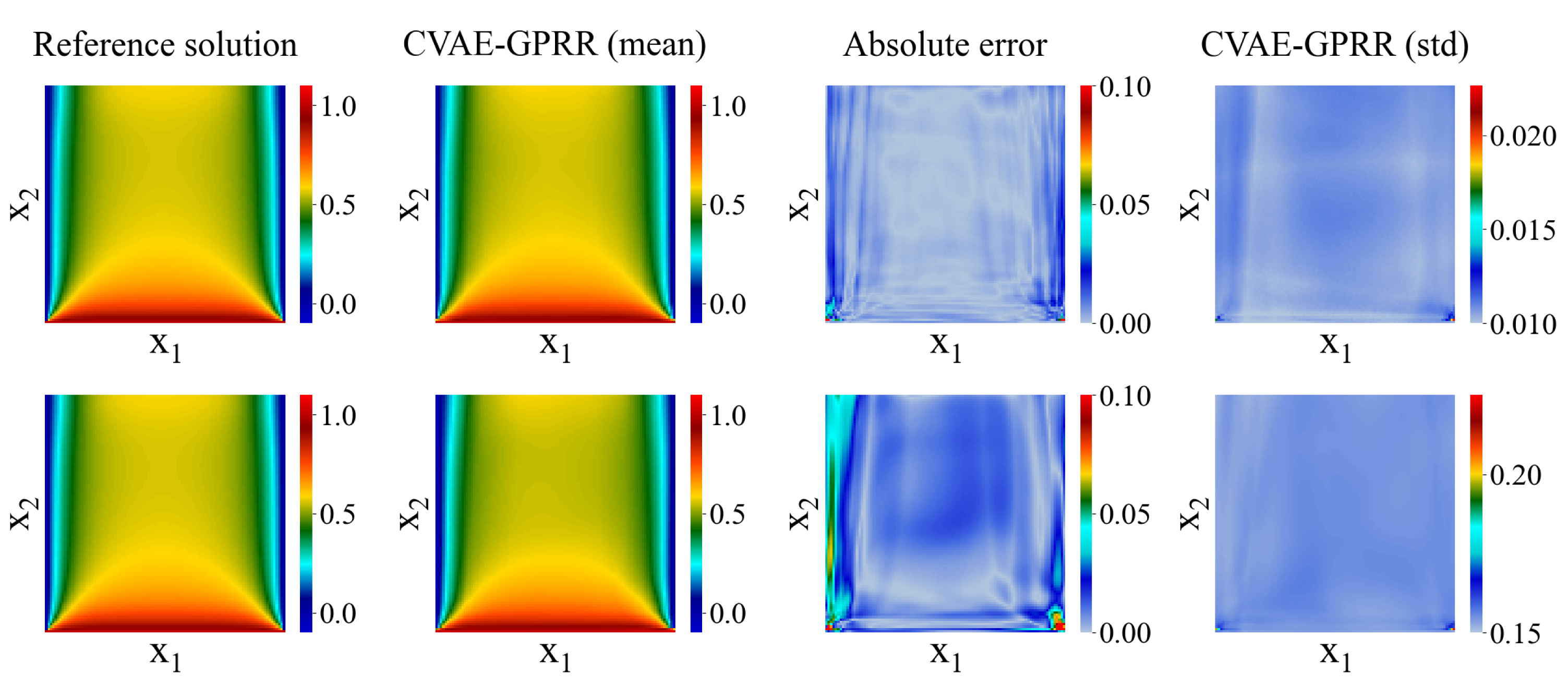}		
	\caption{\small{The parametric diffusion problem case: the predictive result (fine mesh) of CVAE-GPRR at the test instance of the parameters $\boldsymbol{\xi}=(0.50, 0.45, 0.06)$. From top to bottom, the noise level is 0.01,0.15, respectively.}}	
	\label{sec4_exper2:fig_equation_result}	
\end{figure}

In Section \ref{ssec:Comparison with Conditional Variational Autoencoder}, we note that the results of GPR recognition model can also be seen as a prior of latent variables. At the end of this subsection,  we replace the GPR prior with a trival setting — standard Gaussian distribution, i.e., $\mathcal{N}(0,1)$ and retrain the likelihood model using the same hyperparameters in Table \ref{sec4_exper2:table_hyper}. It can be seen from Table \ref{sec4_exper2:table_rel_error} that the performance of CVAE-GPRR is a bit better than that of $\mathcal{N}(0,1)$ in this test case. However, the loss function of CVAE-GPRR declines faster in Figure \ref{sec4_exper2:fig_prior_loss}, which illustrates that the proposed method help speeding up the training process.
\begin{table}[htbp]
	\centering  
	\setlength{\abovecaptionskip}{0cm}
	\setlength{\belowcaptionskip}{0.2cm}
	\caption{The parametric diffusion problem case: comparison of relative test mean errors for different prior.}  
	\label{sec4_exper2:table_rel_error}  
	\begin{threeparttable}
		\scalebox{0.8}{
			\begin{tabular}{c|cccc}
				\toprule  
				\diagbox[width=4cm,height=1.5cm]{Noise level}{$\epsilon_{test}$}{Method} & \tabincell{c}{CVAE-GPRR\\(coarse mesh)} &\tabincell{c}{$\mathcal{N}(0,1)$\\(coarse mesh)} & \tabincell{c}{CVAE-GPRR\\(fine mesh)}&\tabincell{c}{$\mathcal{N}(0,1)$\\(fine mesh)}\\   
				\hline
				& & & & \\[-6pt]  
				0.01 &$9.5012\times 10^{-3}$ &$1.2345\times 10^{-2}$ & $1.9056\times 10^{-2}$ &$2.4658\times 10^{-2}$\\
				& & & &\\[-6pt]  
				0.15 &$2.5350\times 10^{-2}$ &$3.9123\times 10^{-2}$ & $3.4258\times 10^{-2}$ &$4.6054\times 10^{-2}$\\
				\bottomrule
		\end{tabular}}
	\end{threeparttable}       
\end{table}
\begin{figure}[htbp]		
	\centering		
	\includegraphics[scale=0.25]{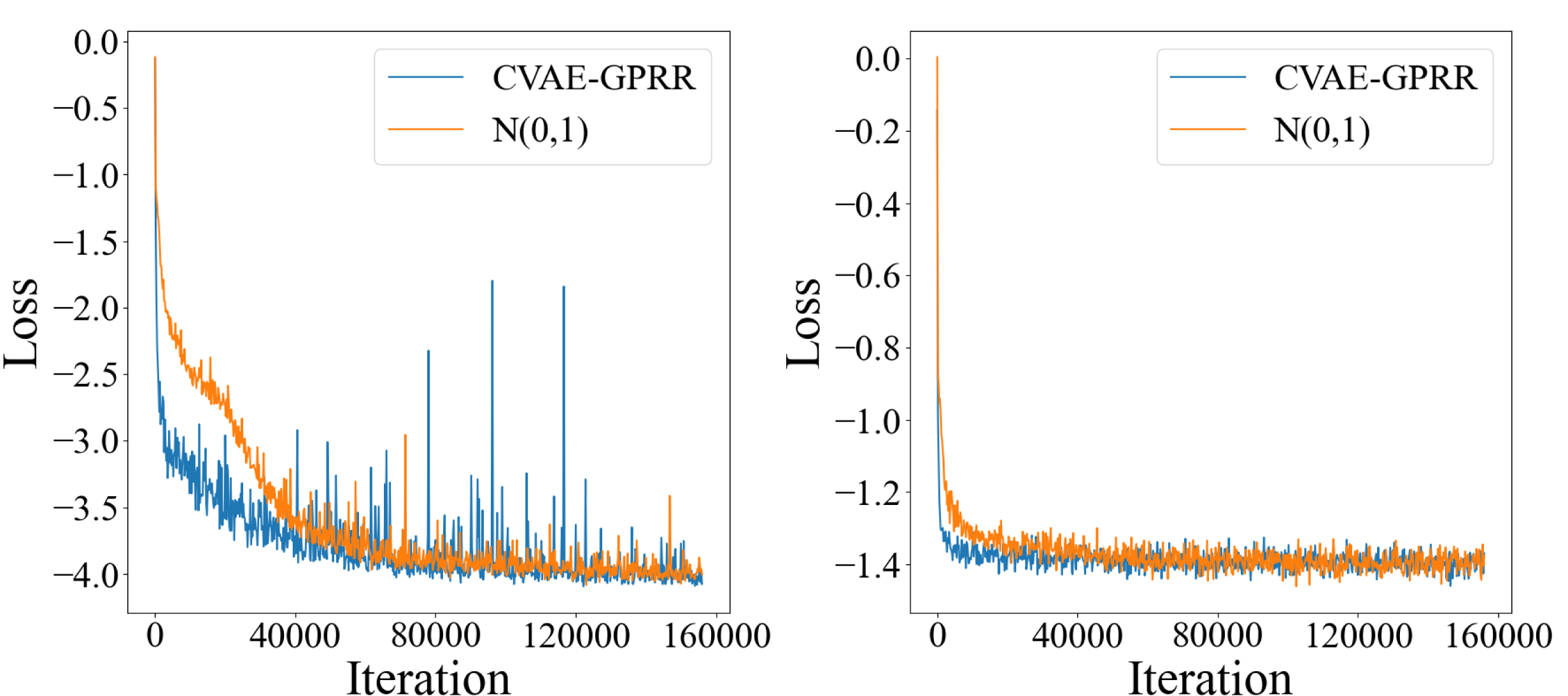}		
	\caption{\small{The parametric diffusion problem case: loss function of likelihood model recorded every 200 iterations. Left: the noise level is 0.01. Right: the noise level is 0.15.}}	
	\label{sec4_exper2:fig_prior_loss}	
\end{figure}
\subsection{Parametric p-Laplacian equation}
\label{ssec:2D parametric p-Laplacian equation}
In this section, we consider a parametric p-Laplacian equation with Dirichlet boundary for a square physical region $\Omega:=(0,1)^2\subset \mathbb{R}^2$:
\begin{equation}
	\left\{
	\begin{aligned}
		&-div\big(|u(\boldsymbol{x},\boldsymbol{\xi})|^{p-2}\nabla u(\boldsymbol{x},\boldsymbol{\xi})\big)=f(\boldsymbol{x},\boldsymbol{\xi})\ &in \ \Omega,\\
		&u(\boldsymbol{x},\boldsymbol{\xi})=g(\boldsymbol{x})\ &on\ \partial{\Omega},
	\end{aligned}
	\right.
	\label{sec4_exper3:equation}
\end{equation}
where $\Vert\cdot\Vert_2$ is 2-norm and the boundary conditions $g$ is defined as the piecewise linear interpolant of the trace of the indicator function $\textbf{1}_{X}(\boldsymbol{x})$. The support of $\textbf{1}_{X}(\boldsymbol{x})$ is the set $X:=\big(\{0\}\times[0.25,0.75]\big)\bigcup\big([0.6,1]\times[0.25,1]\big)$. We introduce the parameter vector
\begin{equation}
	\boldsymbol{\xi}=(p,s,\iota,x_0^1,x_0^2)\in (2,5]\times[5,10]^2\times[-1,1]^2\subset\mathbb{R}^5,
	\nonumber
\end{equation}
and the forcing is
\begin{equation}
	f(\boldsymbol{x},\boldsymbol{\xi})=\frac{\exp(|s|)}{\pi\iota}\exp\bigg(-\frac{\sum_{i=1}^2(x_i-x_0^i)^2}{2\iota^2}\bigg),
	\nonumber
\end{equation}
where $\iota$ is forcing width, $s$ is forcing strength and $(x_0^1,x_0^2)$ is position of forcing. The p-Laplacian operator in the left hand of equation (\ref{sec4_exper3:equation}) is derived from a nonlinear Darcy law and the continuity equation \cite{sec4:p-laplacian}. The parameter p of p-Laplacian is usually greater than or equals to one. For p=2, equation (\ref{sec4_exper3:equation}) is known as Possion equation, which is linear. At critical points ($\nabla u=0$),the equation is degenerate for $p>2$ and singular for $p<2$ (\cite{sec4:fluid}, Chapter1). In this subsection, we only consider former case and let p range from 2 to 5 (Equations with $p>5$ is difficult to be solved).\par
In this experiment, We partition the physical region with a grid of $50\times50$ nodes (fine mesh) and use an efficient solver provided by literature \cite{sec4:equation_solver} to solve the equation for 1500 different parameters, half of which are used as training data and the rest consist of test set. Examples of solutions for three different training parameters are shown in Figure \ref{sec4_exper3:fig_equation_eg}. In addition, a white Gaussian noise with zero means and 0.2 standard deviation is added to the training data and we only take solutions in the coarse mesh of $26\times26$ nodes for training the likelihood model .

\begin{figure}[htbp]		
	\centering		
	\includegraphics[scale=0.25]{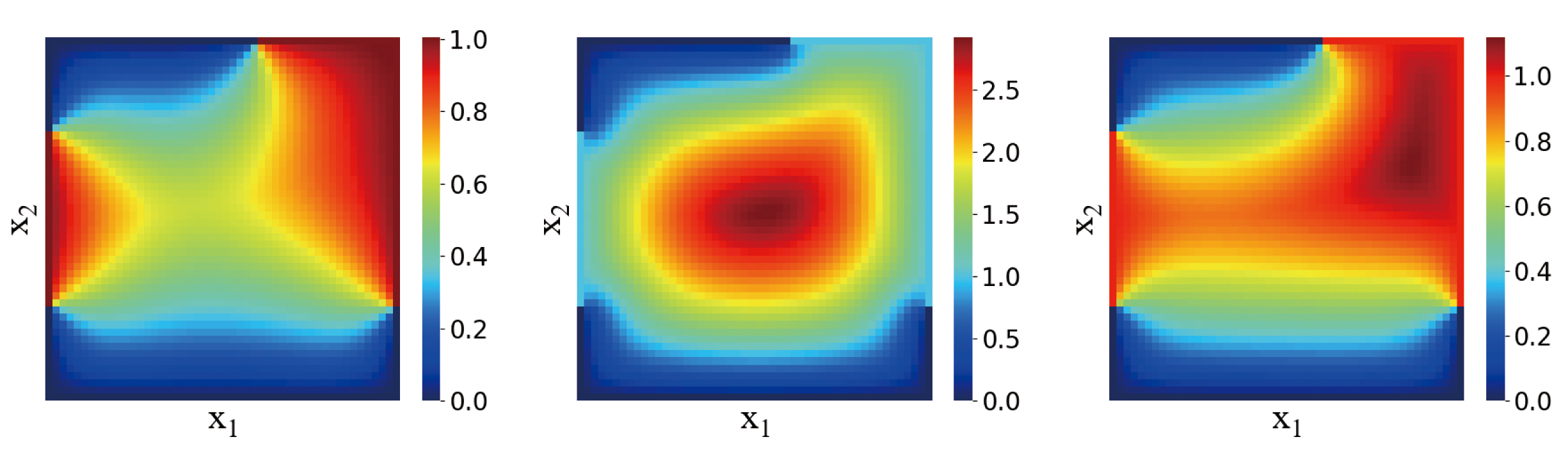}		
	\caption{\small{Examples of solutions (fine mesh) of p-Laplacian equation (\ref{sec4_exper3:equation}) for the parameters  $\boldsymbol{\xi}^{(1)}=(3.19,  5.69 ,  0.91, -0.28,  9.18)$(left), $\boldsymbol{\xi}^{(2)}=(3.10,9.94, -0.92,  0.77,9.57)$}(middle) and  $\boldsymbol{\xi}^{(3)}=(3.23,  6.86, -0.55, -0.11,  6.33)$(right).}	
	\label{sec4_exper3:fig_equation_eg}	
\end{figure}

In this test case, the hyperparameters of likelihood model and training process is shown in Table \ref{sec4_exper3:table_hyper} and ten POD latent variables are chosen.

\begin{table}[htbp]
	\centering  
	\setlength{\abovecaptionskip}{0cm}
	\setlength{\belowcaptionskip}{0.2cm}
	\caption{Hyperparameters of likelihood model for the parametric p-Laplacian equation.}  
	\label{sec4_exper3:table_hyper}  
	\scalebox{0.8}{
		\begin{tabular}{ccccc}
			\toprule  
			Layer structure & Optimizer & Learning rate & Iteration & Batch size \\  
			\midrule
			& & & &  \\[-6pt]  
			17-100-100-100-100-100-100-2& Adam & 0.001-0.0001-0.00001 & 33800-16900-16900 &1000 \\
			\bottomrule
		\end{tabular}
	}
\end{table}
For a given order $p$ and parameters of forcing, the proposed method can predict the mean of the corresponding solution, and also gives uncertainty represented as standard deviation at each spaital location, which is useful when the size of training set is small. In Figure \ref{sec4_exper3:fig_result_datasize}, we show predictions over 1000 POD latent variables samples for one test instance with different size of training data. We can see that accuracy of predictive mean improves as the number of training data increases, while the predictive uncertainty (standard deviation) drops. And only half of original training data can lead to a similar performance in Table \ref{sec4_exper3:table_result_datasize}, which can help reducing the computational cost of GPR.
\begin{figure}[htbp]		
	\centering		
	\includegraphics[scale=0.30]{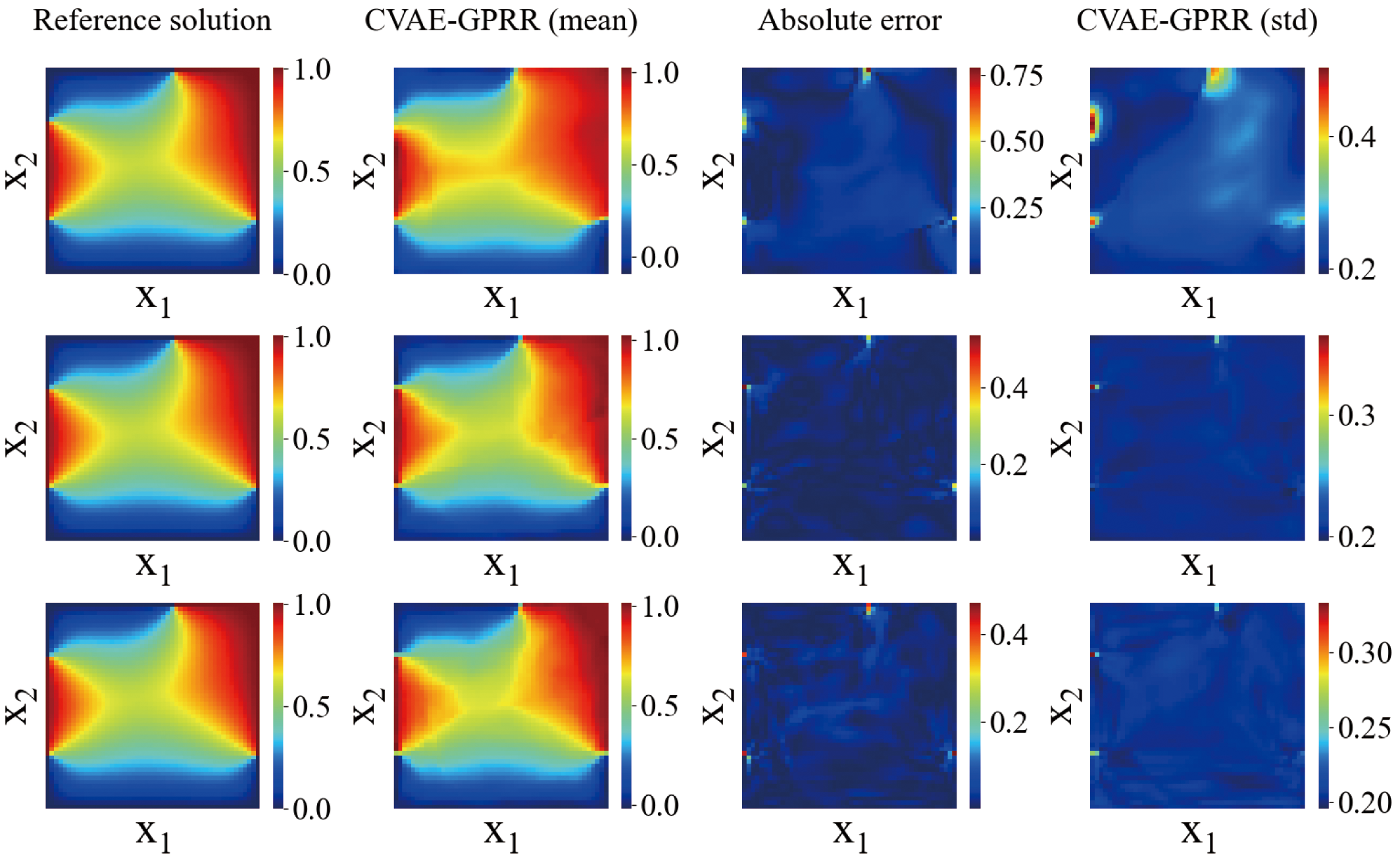}		
	\caption{\small{The parametric p-Laplacian equation case: predictive result (fine mesh) of CVAE-GPRR at the test instance of the parameters $\boldsymbol{\xi}=(3.06,5.50,0.59,-0.07,8.44)$. Top: 250 training data. Middle: 500 training data. Bottom: 1000 training data.}}	
	\label{sec4_exper3:fig_result_datasize}	
\end{figure}
\begin{table}[htbp]
	\centering  
	\setlength{\abovecaptionskip}{0cm}
	\setlength{\belowcaptionskip}{0.2cm}
	\caption{The parametric p-Laplacian equation case: relative test mean errors of CVAE-GPRR with different number of training data.}  
	\label{sec4_exper3:table_result_datasize}  
	\begin{threeparttable}
		\scalebox{0.8}{
			\begin{tabular}{c|cc}
				\toprule  
				\diagbox[width=9cm,height=1.5cm]{the number of training data}{$\epsilon_{test}$}{Method} & \tabincell{c}{CVAE-GPRR\\(coarse mesh)}& \tabincell{c}{CVAE-GPRR\\(fine mesh)}\\   
				\hline
				& &  \\[-6pt]  
				250 &$8.0879\times 10^{-2}$ &$8.1895\times 10^{-2}$ \\
				& & \\[-6pt]  
				500 &$3.9207\times 10^{-2}$ &$4.8534\times 10^{-2}$ \\
				& & \\[-6pt]  
				1000 &$2.9369\times 10^{-2}$ &$4.0701\times 10^{-2}$ \\
				\bottomrule
		\end{tabular}}
	\end{threeparttable}       
\end{table}

\subsection{Skewed lid-driven cavity in parametric region}
\label{ssec:2D skewed lid-driven cavity in parametric region}
Finally, we introduce the steady Navier-Stokes equations, which model the conservation of mass and momentum for a viscous Newtonian incompressible fluid in a parallelogram-shaped cavity $\Omega(\boldsymbol{\xi})$. As shown in Figure \ref{sec4_fig_exper4:physical_region}, the geometry of computational region can be determined by three parameters: sides length $\xi_1\in[1,2]$,$\xi_2\in[1,2]$ and their included angle $\xi_3\in[\frac{\pi}{6},\frac{5\pi}{6}]$. Let the velocity vector and pressure of the fluid are $\vec{v}=\vec{v}(\boldsymbol{x},\boldsymbol{\xi})=(v_x,v_y),p=p(\boldsymbol{x},\boldsymbol{\xi})$, respectively, we consider the parametric equations of the form:
\begin{equation}
	\left\{
	\begin{aligned}
		&\nabla\cdot \vec{v}=0\ &in\  \Omega(\boldsymbol{\xi}),\\
		&-\upsilon(\boldsymbol{\xi})\Delta \vec{v}+\vec{v}\cdot\nabla \vec{v}+\frac{1}{\rho(\boldsymbol{\xi})}\nabla p=\vec{0} \ &in\  \Omega(\boldsymbol{\xi}),\\
		&+b.c.\ &on \ \partial{\Omega(\boldsymbol{\xi})},
	\end{aligned}
	\right.
	\label{sec4_exper4:equation_definition}
\end{equation}
where the boundary conditions are shown in Figure \ref{sec4_fig_exper4:physical_region}. Unit velocity along the horizontal direction is imposed at the top wall of the cavity, and no-slip conditions are enforced on the bottom and the side walls.The pressure is fixed at zero at the lower left corner. Besides, $\upsilon(\boldsymbol{\xi})$ and $\rho(\boldsymbol{\xi})$ represent the dynamic viscosity and uniform density of the fluid. In this experiment, the density $\rho(\boldsymbol{\xi})$ is set unit and the viscosity is computed depending on the geometry through a dimensionless quantity — the Reynold's number ($Re$), which is defined as $Re=max\{\xi_1,\xi_2\}/\upsilon$. Here, we take $Re=400$.

\begin{figure}[htbp]
	\centering
	{\includegraphics[scale=0.13]{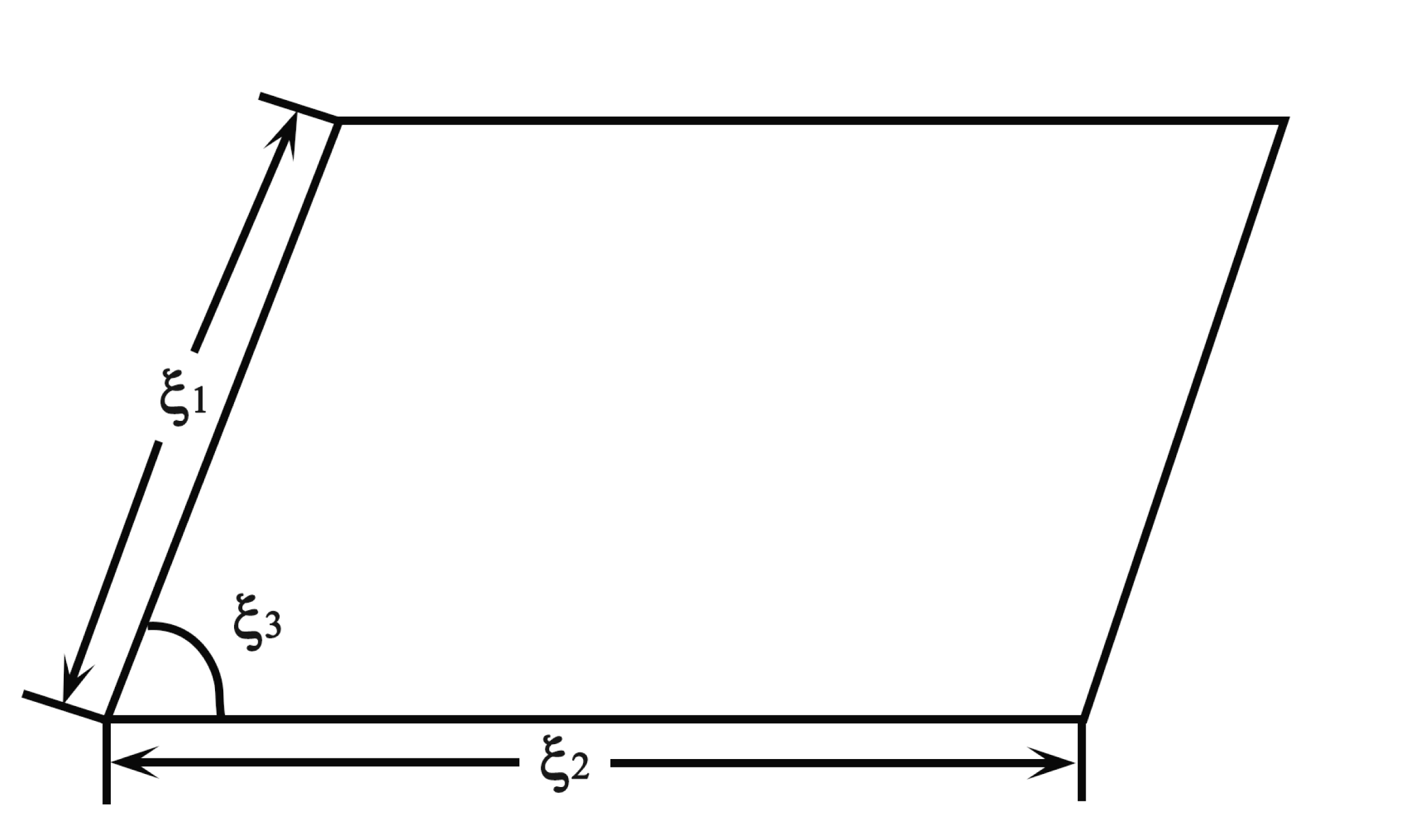}}
	\quad
	{\includegraphics[scale=0.13]{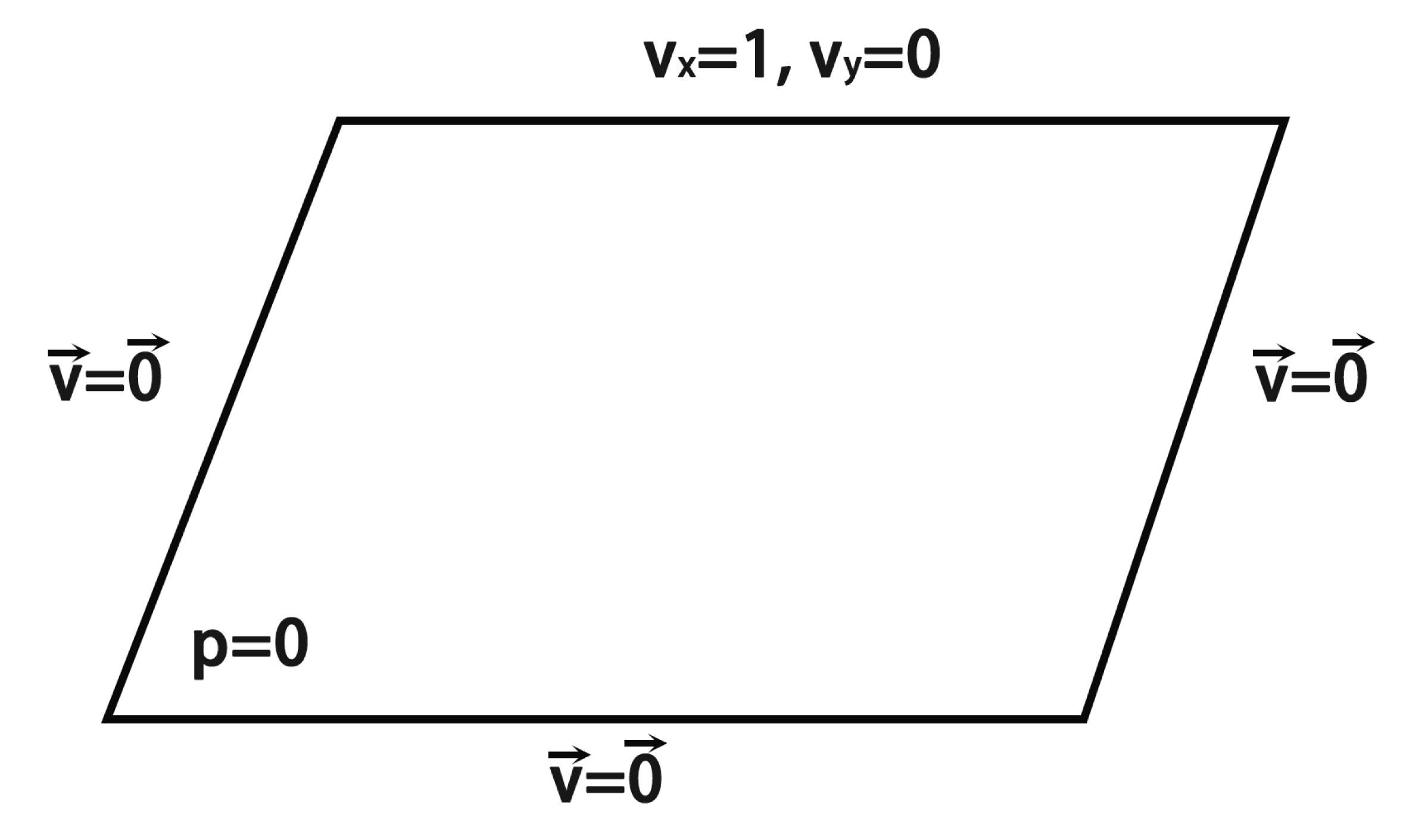}}
	\caption{Parametric geometry of physical region $\Omega$ (left). Boundary condition in equation (\ref{sec4_exper4:equation_definition})(right).}
	\label{sec4_fig_exper4:physical_region}
\end{figure}

We use the dataset provided by \cite{sec4:skewed_lid_driven_cavity,sec4:skewed_lid_driven_cavity_origin}, which simulate equations (\ref{sec4_exper4:equation_definition}) in a mesh with $100\times100$ nodes for 200 different parameters. We choose 150 samples as training set ,the rests are used to test the generalization ability with respect to $\boldsymbol{\xi}$. Although the states of equation includes both pressure and velocity vector, we only use the data of velocity vector in this experiment and model for each entry of $\vec{v}$ respectively in a coarser mesh with $51\times51$ nodes. An example of velocity vector for one training parameters are shown in Figure \ref{sec4_exper4:example}.

\begin{figure}[htbp]
	\centering
	\subcaptionbox{Velocity along x-axis\label{sec4_subfig:v_x}}
	{\includegraphics[scale=0.17]{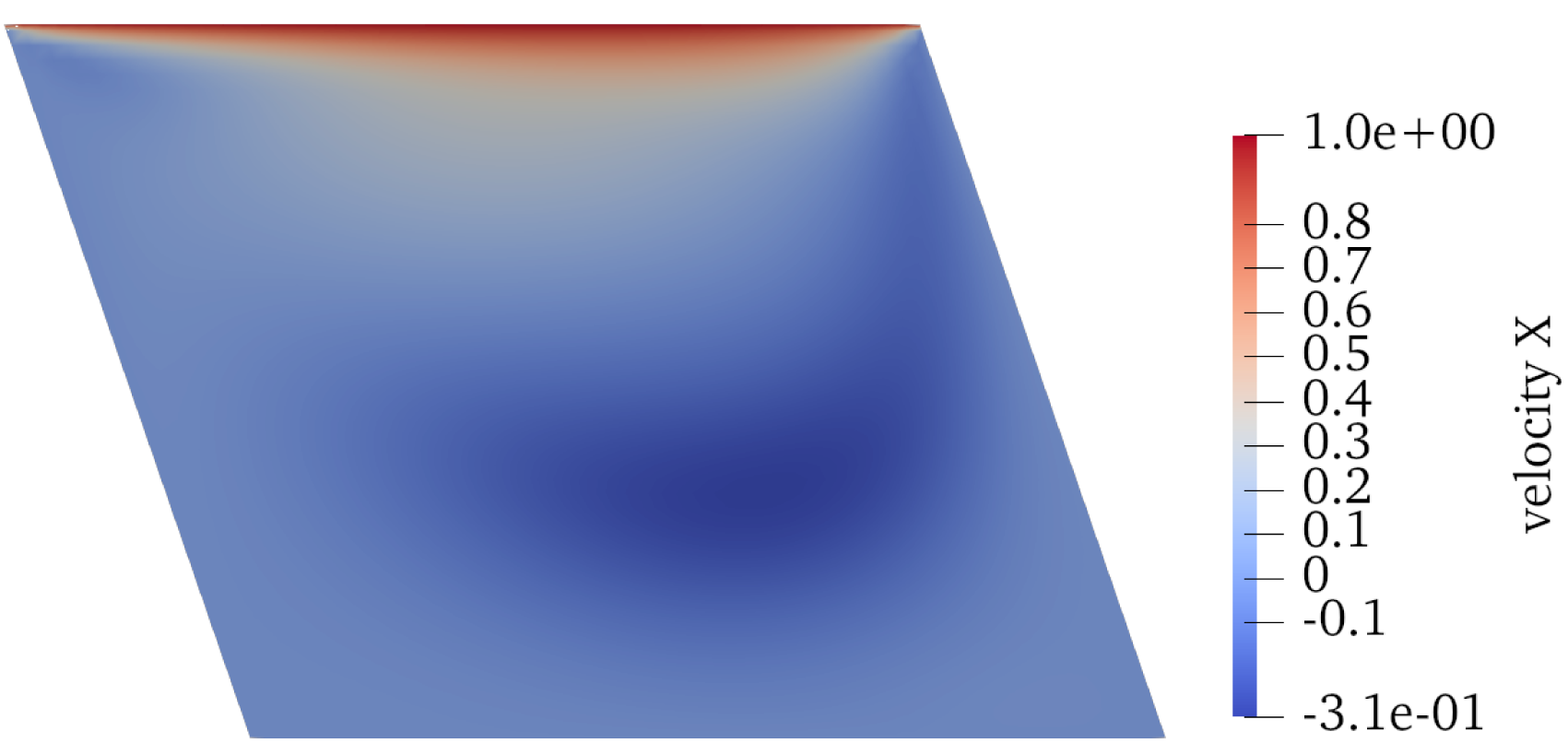}}
	\quad
	\subcaptionbox{Velocity along y-axis\label{sec4_subfig:v_y}}
	{\includegraphics[scale=0.17]{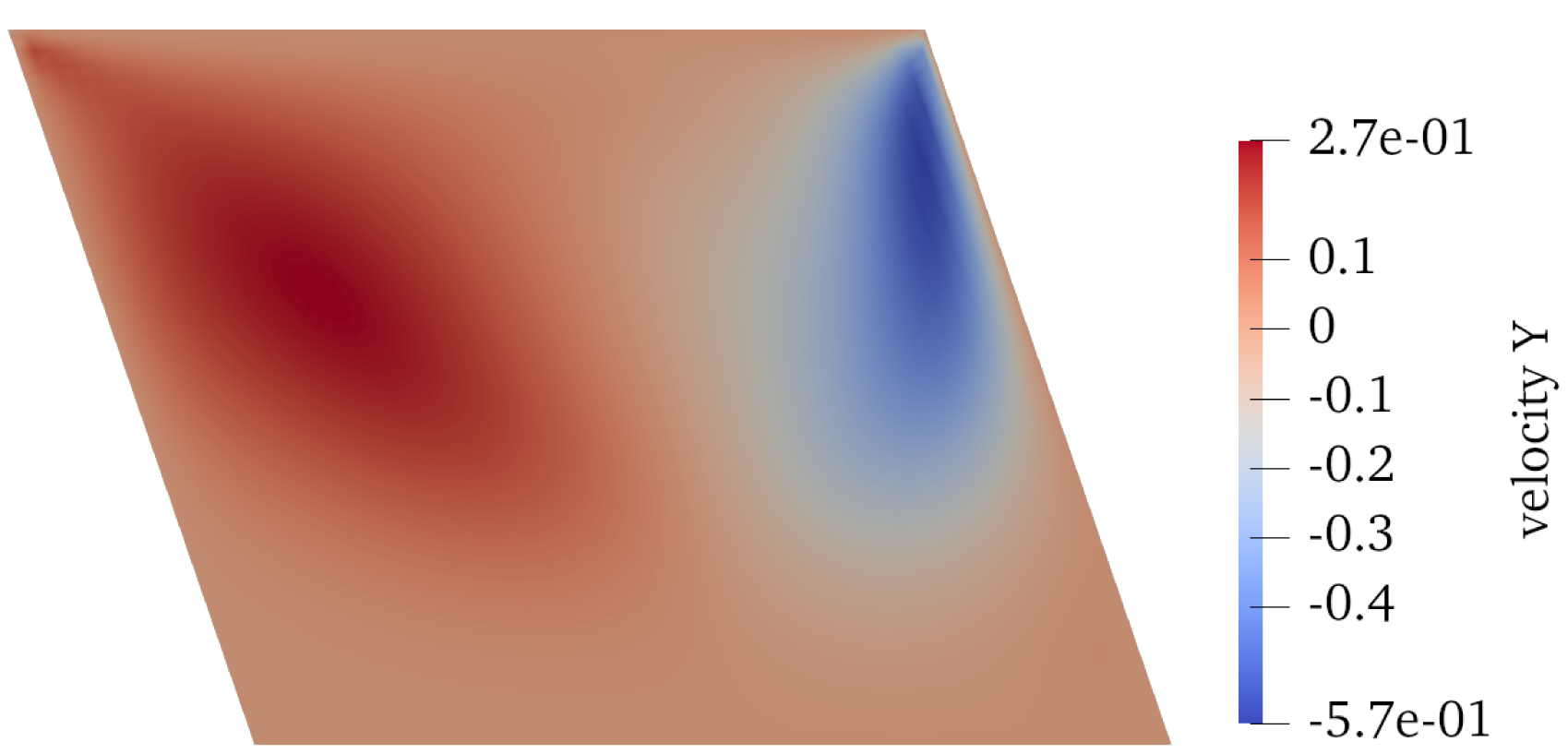}}
	\subcaptionbox{Velocity magnitude\label{sec4_subfig:v_magnitude}}
	{\includegraphics[scale=0.17]{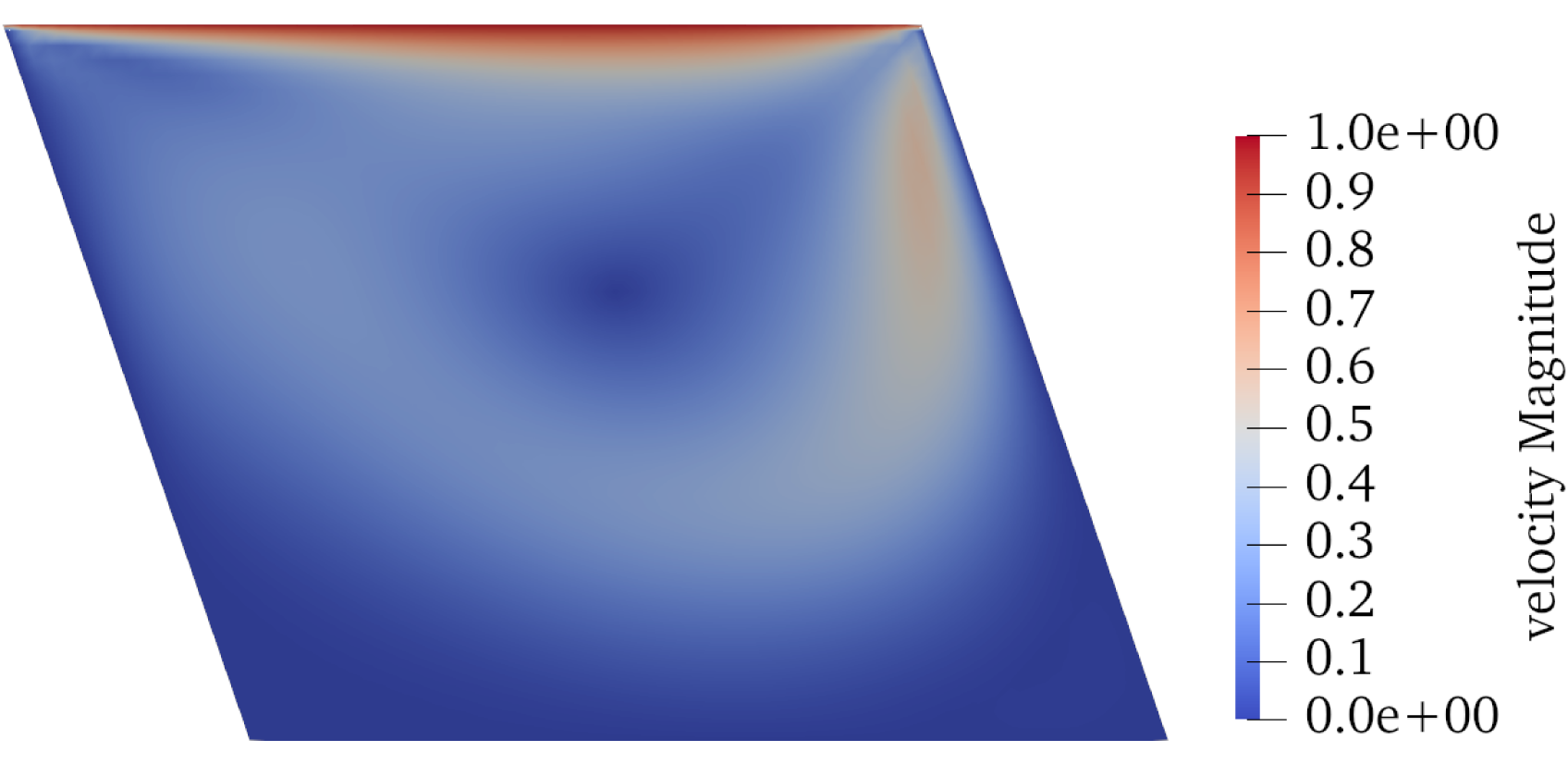}}
	\quad
	\subcaptionbox{Streamlines\label{sec4_subfig:v_streamline}}
	{\includegraphics[scale=0.17]{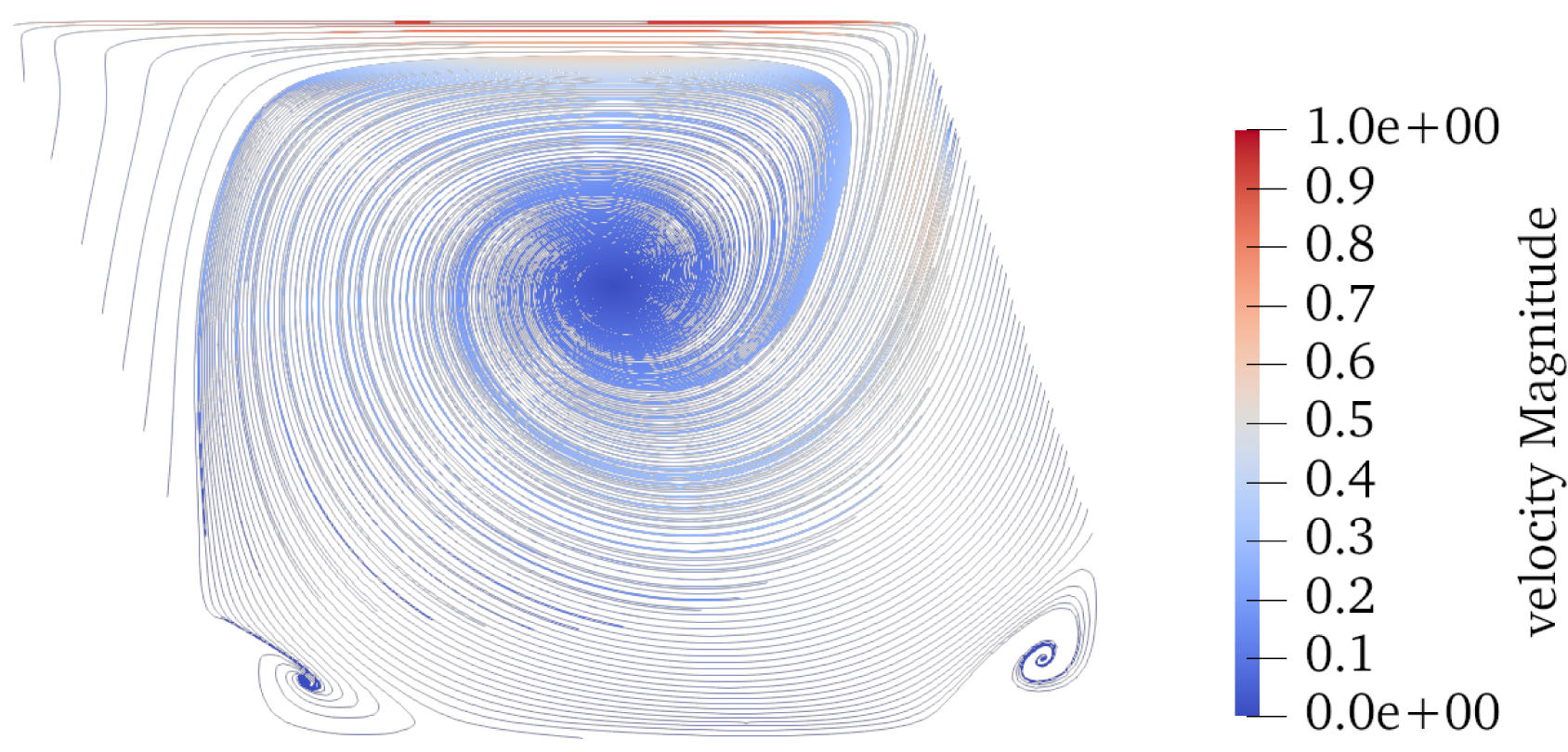}}
	\caption{An example of velocity vector for $\boldsymbol{\xi}=(1.95,1.97,1.40)$.}
	\label{sec4_exper4:example}
\end{figure}

In this test case, the hyperparameters of likelihood model and training process is shown in Table \ref{sec4_exper4:table_hyper} and twenty POD latent variables are chosen for both $v_x$ and $v_y$.

\begin{table}[ht]
	\centering  
	\setlength{\abovecaptionskip}{0cm}
	\setlength{\belowcaptionskip}{0.2cm}
	\caption{Hyperparameters of likelihood model for the skewed lid-driven cavity in parametric region.}  
	\label{sec4_exper4:table_hyper}  
	\scalebox{0.8}{
		\begin{tabular}{ccccc}
			\toprule  
			Layer structure & Optimizer & Learning rate & Iteration & Batch size \\  
			\midrule
			& & & &  \\[-6pt]  
			25-100-100-100-100-2 & Adam & 0.001-0.0001-0.00001 & 80000-80000-80000 &1000 \\
			\bottomrule
		\end{tabular}
	}
\end{table}
We add the Gaussian noise with zero means and 0.1 standard deviation to the data. Figure \ref{sec4_exper4:v_x_0_1} and Figure \ref{sec4_exper4:v_y_0_1} show the mean and standard deviations of velocity in x-axis and y-axis, respectively. The results illustrated that CVAE-GPRR can obtain the distribution whose mean and standard deviation can approximate the distribution of observation data.Thus, the trained likelihood model can be used to generate samples that can simulate the observation data. The approximation error of mean can be further described by relative test mean error $\epsilon_{test}$, which is shown in Table \ref{sec4_exper4:table_relative_test_mean_error}.

\begin{figure}[htbp]
	\centering
	\includegraphics[scale=0.48]{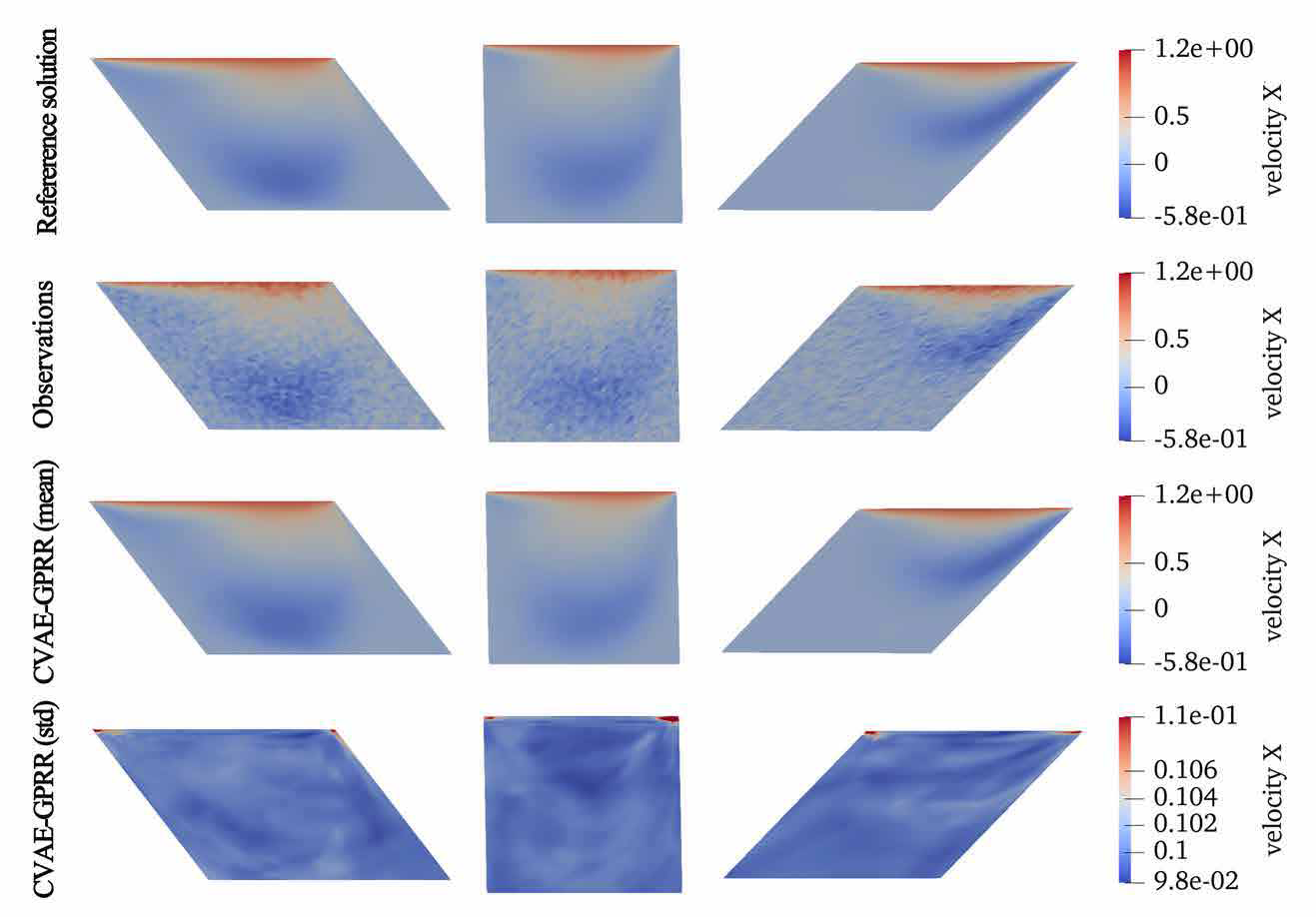}\\
	\caption{The skewed lid-driven cavity in parametric region: predictive results of $v_x$ at the test instances of the parameters (noise level: 0.1). Left: $\boldsymbol{\xi}=(1.646,1.29,2.23)$. Middle: $\boldsymbol{\xi}=(1.37,1.24, 1.59)$. Right: $\boldsymbol{\xi}=(1.92,1.82,0.80)$.}
	\label{sec4_exper4:v_x_0_1}
\end{figure}
\begin{figure}[htbp]
	\centering
	\includegraphics[scale=0.48]{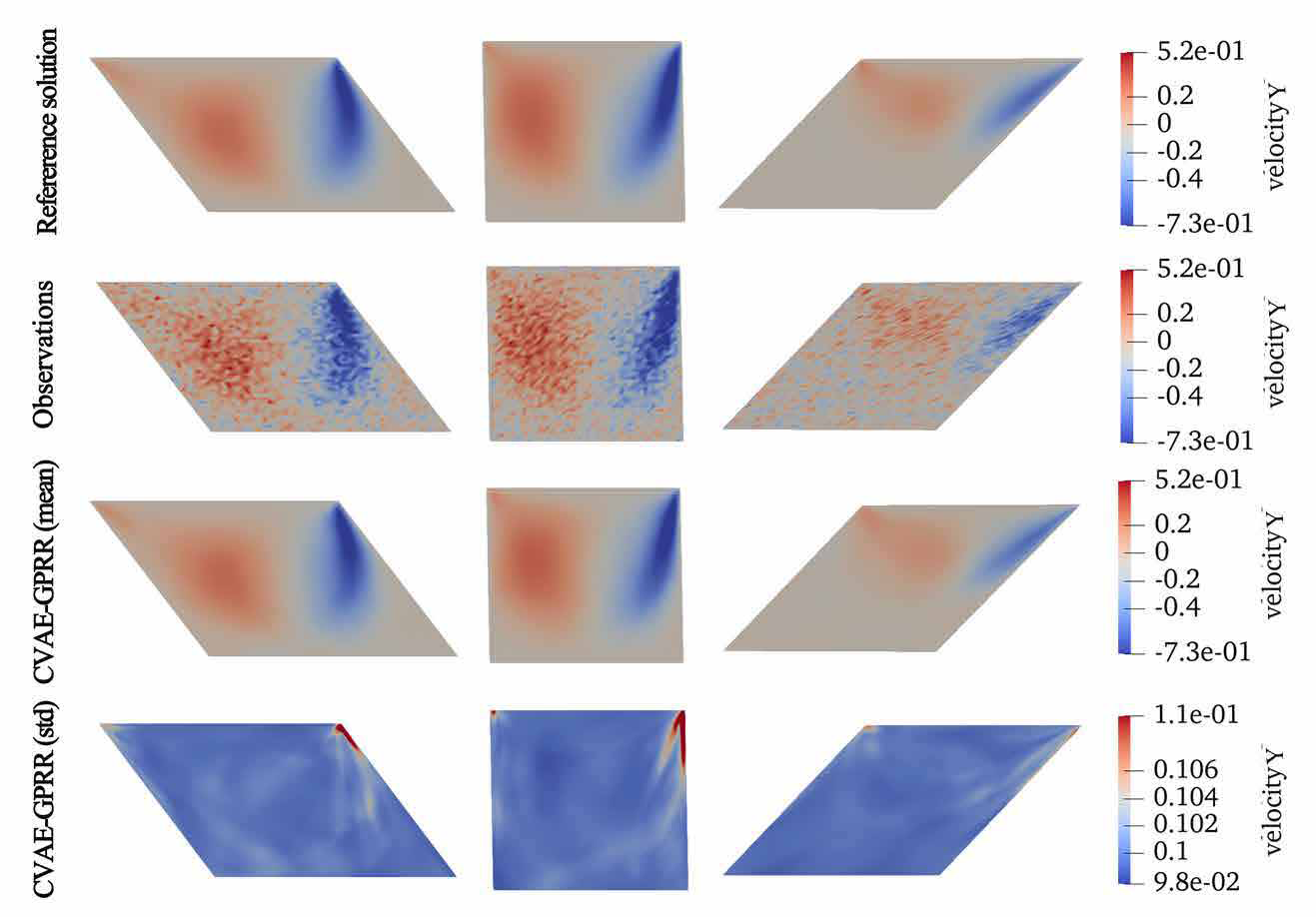}\\
	\caption{The skewed lid-driven cavity in parametric region: predictive results of $v_y$ at the test instances of the parameters (noise level: 0.1). Left: $\boldsymbol{\xi}=(1.646,1.29,2.23)$. Middle: $\boldsymbol{\xi}=(1.37,1.24, 1.59)$. Right: $\boldsymbol{\xi}=(1.92,1.82,0.80)$.}
	\label{sec4_exper4:v_y_0_1}
\end{figure}
\begin{table}[htbp]
	\centering  
	\setlength{\abovecaptionskip}{0cm}
	\setlength{\belowcaptionskip}{0.2cm}
	\caption{The skewed lid-driven cavity in parametric region: relative test mean errors for different noise levels}
	\label{sec4_exper4:table_relative_test_mean_error}  
	\scalebox{0.8}{
	\begin{tabular}{c|cc}
		\toprule  
		\diagbox[width=5cm,height=1.5cm]{Velocity}{$\epsilon_{test}$}{Noise level} & 0.02  & 0.1\\    
		\hline
		& &  \\[-6pt]  
		$v_x$ & $4.1473\times 10^{-2}$ & $5.5369\times 10^{-2}$ \\
		& &  \\[-6pt]  
	    $v_y$ & $5.4056\times 10^{-2}$ & $9.9141\times 10^{-2}$\\
		\bottomrule
    \end{tabular}}
\end{table}
Figure \ref{sec4_exper4:v_x_0_1} and Figure \ref{sec4_exper4:v_y_0_1} hardly visualize the minor difference between the truth and predictive mean. Therefore, the streamlines are further reported in Figure \ref{sec4_exper4:streamline_0_1}. We find that the proposed method can uncover the main circulation zone (regions of high-velocity) from noisy data while fails to approximate the recirculation zones (regions of low-velocity). We attribute the poor performance in the recirculation zones to the influence of noise since the streamlines of observation data shown in the second row of Figure \ref{sec4_exper4:streamline_0_1} only provide the screwy shape of the main circulation zone.
\begin{figure}[htbp]
	\centering
	{\includegraphics[scale=0.48]{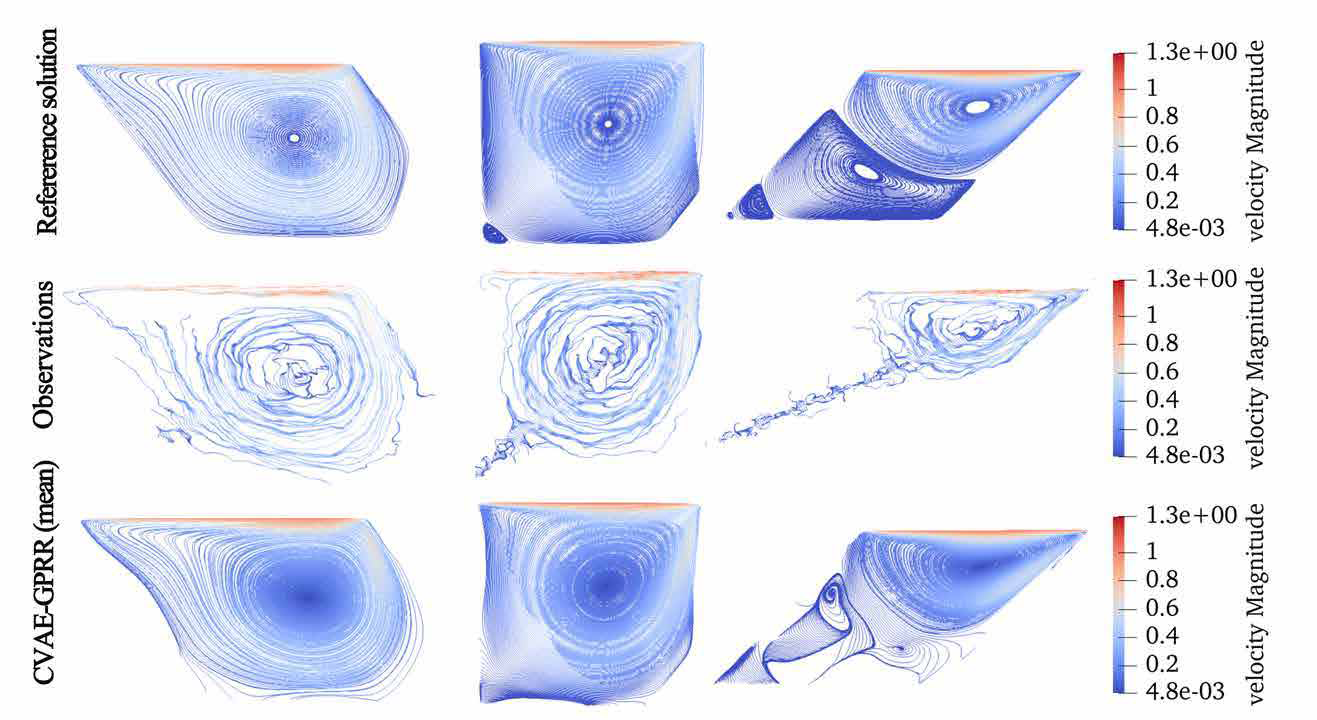}}
	\caption{The skewed lid-driven cavity in parametric region: streamlines colored by velocity magnitude at the test instances of the parameters (noise level: 0.1). Left: $\boldsymbol{\xi}=(1.64,1.29,2.23)$. Middle: $\boldsymbol{\xi}=(1.37,1.24, 1.59)$. Right: $\boldsymbol{\xi}=(1.92,1.82,0.80)$.}
	\label{sec4_exper4:streamline_0_1}
\end{figure}

\section{Conclusion}
In this work, we have provided a conditional variational autoencoder with GPR recognition model. By combing modern deep latent variable modeling with traditional data-driven ROM techniques , CVAE-GPRR inherited their merits and also overcame some limitations, which led to an efficient modeling for parametric problems. For recognition model, CVAE uses a neural network to extract low-dimensional features and alleviate the influence of noise in data, which is not interpretable and is hard to train due to a lot of parameters in neural networks. Instead, in the recognition process, the proposed method first used POD to obtain the principle modes. POD projection coefficients were chosen as latent variables. And then non-parametric probabilistic model GPR was used to learn the mapping from parameters space to latent space and denoised the observation data at the same time. Since each entry of POD projection coefficients were uncorrelated, we could train the GPR recognition model parallelly. Compared to traditional non-intrusive ROM techniques, we replaced the POD basis with likelihood neural networks to reconstruct data when data was highly noisy. To obtain a model that also had generalization abilities of physical variables, which was unavailable in both CVAE and non-intrusive ROM, we added the physical variables into the inputs of likelihood model. With efficacy of CVAE-GPRR numerically validated by various examples, CVAE-GPRR was shown to be a powerful tool for modeling parametric systems with noisy observations and could give the uncertainty quantification at the same time.

\section{Aknowledge}
L. Jiang acknowledges the support of NSFC 12271408 and the
Fundamental Research Funds for the Central Universities.

\smallskip
\bigskip
\textbf{Conflict of interest statement: }
The authors have no conflicts of interest to declare. All co-authors have seen and agree with the contents of the manuscript.


\end{document}